\theoremstyle{plain} 
\theoremstyle{plain} 
\theoremstyle{plain} \newtheorem{theor}{Theorem}[section] 
\theoremstyle{plain} 
\theoremstyle{plain} \newtheorem*{corol}{Corollary} 
\theoremstyle{remark} 
\theoremstyle{plain} 
\theoremstyle{remark} 
\begin{document}
\title{A new approach to fuzzy sets:\\ Application to the design of nonlinear time-series,\\ symmetry-breaking patterns, \\ and non-sinusoidal limit-cycle oscillations}

\author{Vladimir Garc\'{\i}a-Morales}

\affiliation{Departament de Termodin\`amica, Universitat de Val\`encia, E-46100 Burjassot, Spain}
\email{garmovla@uv.es}




\begin{abstract}
\noindent It is shown that characteristic functions of sets can be made fuzzy by means of the $\mathcal{B}_{\kappa}$-function, recently introduced by the author, where the fuzziness parameter $\kappa \in \mathbb{R}$ controls how much a fuzzy set deviates from the crisp set obtained in the limit $\kappa \to 0$.  As applications, we present first a general expression for a switching function that may be of interest in electrical engineering and in the design of nonlinear time-series. We then introduce another general expression that allows wallpaper and frieze patterns for every possible planar symmetry group (besides patterns typical of quasicrystals) to be designed. We show how the fuzziness parameter $\kappa$ plays an analogous role to temperature in physical applications and may be used to break the symmetry of spatial patterns. As a further, important application, we establish a theorem on the shaping of limit cycle oscillations far from bifurcations in smooth deterministic nonlinear dynamical systems governed by differential equations. Following this application, we briefly discuss a generalization of the Stuart-Landau equation to non-sinusoidal oscillators obtained as a consequence of our theorem. ~\\

\noindent \emph{Keywords:} Fuzzy sets; fuzzy logic; switching function; wallpaper groups; symmetry breaking; time series; limit cycles; nonlinearity
 \end{abstract}
\maketitle

\section{Introduction}

The theory of fuzzy sets \cite{Zadeh,Klaua,Goguen} generalizes the traditional concept of a class (i.e. a set of objects for which a certain property holds) to situations where membership to the class is vaguely defined so that a continuum of grades of membership is possible. Technically speaking, the membership function of a fuzzy set (i.e. the characteristic or indicator function) is not a Boolean function as is the case of a `crisp' set (see \cite{ZimmermannREV} for a review). Rather, the truth value can be any real number in the unit interval. Fuzzy sets, and the closely associated concept of fuzzy logic, have been applied to a wide variety of fields including control theory and artificial intelligence \cite{Zimmermann}. 

In this article a new approach to fuzzy sets is presented. Our approach makes use of the $\mathcal{B}$-function \cite{VGM1} that, as we show here, allows characteristic functions of crisp sets to be specified. These characteristic functions become then fuzzyfied by means of the $\mathcal{B}_{\kappa}$-function \cite{JPHYSA,homotopon},  the one-parameter $\kappa$ deformation of the $\mathcal{B}$-function. The laws of algebra of both crisp and fuzzy sets are addressed. Our formulation is different to previous ones found in the literature (see Table 1 in \cite{ZimmermannREV}) because the `aggregation' operators (defining union and intersection of fuzzy sets) are different in our case.


We then present several applications of our theory to the mathematical modelling of nonlinear systems.  First, with help of the  $\mathcal{B}$-function and its fuzzy version we formulate a general expression for a switching function that may be of interest in electrical engineering \cite{Marouchos} and in the design of nonlinear time-series. It contrasts with the ones found in the literature \cite{Marouchos} in that our switching function can be tuned from sinusoidal to square-wave-like by changing the fuzziness parameter $\kappa$.


We then introduce a single general expression that allows to produce wallpaper and frieze patterns with all possible symmetry groups, as well as patterns typical of quasicrystals \cite{Farris}. We show how the fuzziness parameter $\kappa$ plays an analogous role to temperature in physical applications and may be used to break the symmetry of spatial patterns and to tune their smoothness.

As a further application to nonlinear dynamics, we show how the shape of limit-cycle oscillations in smooth nonlinear deterministic systems governed by systems of differential equations can be explicitly designed beforehand by using fuzzy sets as introduced in this article. We prove a general theorem that establishes the existence of a limit cycle with predefined shape for a wide variety of smooth dynamical systems and illustrate this theorem with examples. Our theorem can be useful in the mathematical modelling of, e.g. relaxation oscillators \cite{Ermentrout}, i.e. nonlinear oscillators whose periodic behavior strongly depart from sinusoidal oscillations. Typical examples of these oscillators, of biological interest, are provided by the van der Pol oscillator \cite{vanderpol}, the FitzHugh-Nagumo model for nerve cells \cite{Fitzhugh,Nagumo}, the Oregonator for the Belousov-Zhabotinsky reaction \cite{Fields,Belousov,Zhabotinsky}, the Morris-Lecar model for neuronal oscillations \cite{Morris}, genetic oscillators \cite{Guantes}, and modified versions of the Selkov model for glycolytic oscillations \cite{Segel,Goldbeter,Gonze,Westermark}.


The outline of this article is as follows. In Section \ref{mathprembfun} we present mathematical structures that allow characteristic functions for crisp sets to be defined. These are the $\mathcal{B}$-function and its modifications. We then prove a theorem that yields a closed expression for the characteristic function of the union of a collection of sets. In Section \ref{fuzzytheo} we present our concise approach to fuzzy sets, based on the results of Section \ref{mathprembfun}. Finally, in Section \ref{dynamicdays} we prove a Theorem on the existence of limit-cycles with predesigned shape for a wide variety of smooth dynamical systems. Our result is both an application of the Poincar\'e-Bendixson theorem and of fuzzy sets as introduced in this article. We then discuss two specific examples (elliptic limit cycles and Cassini ovals) of the application of our theorem. 

\section{The $\mathcal{B}$-function and crisp sets} \label{mathprembfun}

In this section we introduce some functions that allow the characteristic functions of crisp sets to be expressed in an alternative, useful way. Let  $U$ denote the universe of discourse and let $x$ be an element. Let also $A \subseteq U$ be a set. We say that $A$ is a  'crisp' set if the membership of the element $x$ to $A$ is provided in terms of a bivalent condition in which $x$ either fully belongs or does not belong  to $A$. The characteristic (membership) function of a crisp set $A$, $\text{Set}(x; A): U\to \{0,1\}$, has thus the form \cite{Whitney1,Whitney2}
\begin{align}\label{charfu}
\begin{split}
\text{Set}\left(x; A\right)={\begin{cases} \text1 &{\text{if }}x \in A \\ 0 &{\text{if }}x \notin A  \end{cases}}
\end{split}
\end{align}
The complement set $A^c$ of the set $A$ is given by
\begin{equation}
\text{Set}\left(x; A^c\right)=1-\text{Set}\left(x; A\right) \label{charcomset}
\end{equation}
For the intersection of $n$ crisp sets $S^{(1)}$, $\ldots$, $S^{(n)}$, the characteristic function $\text{Set}\left(x; \cap_{j=1}^{n}S^{(j)} \right)$ is equal to the product of the characteristic functions of the individual sets
\begin{align}\label{interse}
\begin{split}
\text{Set}\left(x; \bigcap_{j=1}^{n}S^{(j)} \right)=\prod_{j=1}^{n}\text{Set}\left(x; S^{(j)}\right)
\end{split}
\end{align}

Finally, the union of $n$ sets $S^{(1)}$, $\ldots$, $S^{(n)}$ has characteristic function $\text{Set}\left(x; \bigcup_{j=1}^n S^{(j)}\right)$ given by the inclusion-exclusion principle \cite{Stanley} as
\begin{eqnarray}
&& \text{Set}\left(x; \bigcup_{j=1}^n S^{(j)}\right)  =\nonumber \\ &&  \sum_{k = 1}^{n} (-1)^{k+1} \sum_{1 \leq j_{1} < \cdots < j_{k} \leq n} \text{Set}\left(x; \bigcap_{m=1}^k S^{(j_{m})}\right)  \label{inextrape}
\end{eqnarray}

In this work, we shall consider $U=\mathbb{R}$.
The characteristic function, as defined by Eq. (\ref{charfu}), takes only values 0 and 1 and is compactly supported. This suggests that the latter can be defined in terms of a $\mathcal{B}$-function that we have used in a recent formulation of a universal map for cellular automata \cite{VGM2,VGM3} and subtitution systems \cite{VGM4}, since the $\mathcal{B}$-function also takes only a finite number of values $0,\pm 1/2,\pm 1$. The interest of this strategy lies in the fact that the $\mathcal{B}$-function can be easily embedded in the real numbers by means of its one-parameter $\kappa$-deformation, the $\mathcal{B}_{\kappa}$-function \cite{JPHYSA} that, as we shall see, leads to a new construction for fuzzy sets.  

For arbitrary $x, y \in \mathbb{R}$ the $\mathcal{B}$-function is defined as \cite{VGM1}
\begin{eqnarray}
\mathcal{B}(x,y)&=&\frac{1}{2}\left(\frac{x+y}{|x+y|}-\frac{x-y}{|x-y|}\right) \nonumber \\
&=&\frac{1}{2}\left(\text{sign}(x+y)-\text{sign}(x-y)\right) \nonumber \\
&=&{\begin{cases} \text{sign}(y)&{\text{if }}|x| < |y|\\ \text{sign}(y)/2 &{\text{if }}|x|=|y|, y\ne 0 \\0& {\text{otherwise}} \end{cases}} \label{d1}
\end{eqnarray}
The properties $\mathcal{B}(-x,y)=\mathcal{B}(x,y)$ and $\mathcal{B}(x,-y)=-\mathcal{B}(x,y)$ hold, i.e. the $\mathcal{B}$-function is an even function of its first argument and an odd function of its second argument. For positive $y$, this function has the form of a rectangular function of unit height in the interval $x<|y|$, with value $1/2$ at $|x|=y$ and $0$ otherwise \cite{JPHYSA}. For $y=0$ the function is zero everywhere. The $\mathcal{B}$-function is represented in Fig. \ref{boxen2} (top left) for $y=\epsilon>0$. Because of the $\mathcal{B}$-function being an odd function of its second argument, 
for negative $y$ it has the form of a rectangular well with values $-1$ in the interval $x<|y|$, $-1/2$ at $|x|=|y|$ and $0$ otherwise \cite{JPHYSA}. 

If $x\in \mathbb{Z}$ and $y \in \left.\left(0,\frac{1}{2}\right.\right]$, $y\in \mathbb{R}$, the $\mathcal{B}$-function, Eq. (\ref{d1}), reduces to the Kronecker delta of its first argument
\begin{equation}
\mathcal{B}(x,y)=\delta_{x0}={\begin{cases} 1&{\text{if }}x =0,  ~  y \in \left.\left(0,\frac{1}{2}\right.\right], y \in \mathbb{R} \\ 0 &{\text{if }}x\ne 0, ~ x\in \mathbb{Z}, y \in \left.\left(0,\frac{1}{2}\right.\right], y \in \mathbb{R} \end{cases}} \label{d1Kro}
\end{equation}
and, therefore, if $m, n \in \mathbb{Z}$, then $\mathcal{B}\left(m-n,\varepsilon \right)=\mathcal{B}\left(m-n,\frac{1}{2}\right)=\delta_{mn}$ for any $\varepsilon \in \mathbb{R}$ such that $\varepsilon \in \left.\left(0,\frac{1}{2}\right.\right]$.

The following is the \emph{splitting property} of the $\mathcal{B}$-function, and is valid for any $x, y, z \in \mathbb{R}$
\begin{equation}
\mathcal{B}(x,y+z)=\mathcal{B}(x+y,z)+\mathcal{B}(x-z,y) \label{splito} 
\end{equation}
as can be easily checked. For $n\in \mathbb{N}$, $x, \varepsilon \in \mathbb{R}$ we also have the \emph{block coalescence property}
\begin{eqnarray}
\sum_{k=0}^{n-1}\mathcal{B}(x-2k\varepsilon,\varepsilon)&=&\mathcal{B}(x-(n-1)\varepsilon,n\varepsilon) \label{induin} 
\end{eqnarray}
This is proved by induction. For $n=1$ the result is obviously valid. Let us assume it also valid for $n$ terms. Then, for $n+1$ terms we have
\begin{eqnarray}
&&\sum_{k=0}^{n}\mathcal{B}(x-2k\varepsilon,\varepsilon)=\mathcal{B}(x-2n\varepsilon,\varepsilon)+\sum_{k=0}^{n-1}\mathcal{B}(x-2k\varepsilon,\varepsilon) \nonumber \\
&&=\mathcal{B}(x-2n\varepsilon,\varepsilon)+\mathcal{B}(x-(n-1)\varepsilon,n\varepsilon) \nonumber \\
&&=\mathcal{B}(x-n\varepsilon,(n+1)\varepsilon)-\mathcal{B}(x-(n-1)\varepsilon,n\varepsilon)+\nonumber \\
&&\ +\mathcal{B}(x-(n-1)\varepsilon,n\varepsilon)=\mathcal{B}(x-n\varepsilon,(n+1)\varepsilon) \nonumber
\nonumber
\end{eqnarray} 
where the splitting property, Eq. (\ref{splito}) has been used. This establishes the validity of Eq. (\ref{induin}).

Any \emph{inequality} of the form $r<x<q$ involving the real numbers $r$ and $q$ ($r<q$) can thus be replaced by an \emph{identity}
\begin{equation}
\mathcal{B}\left(x-\frac{q+r}{2},\frac{q-r}{2}\right)=1  \label{interesting1}
\end{equation}
We note that, by definition, the $\mathcal{B}$-function, as defined above, returns $\pm 1/2$ at the borders, the sign depending on the second argument. In order to handle the values at the borders it proves convenient to introduce some other functions defined in terms of the $\mathcal{B}$-function. These are the functions $\mathcal{B}_{++}(x,y)$, $\mathcal{B}_{--}(x,y)$, $\mathcal{B}_{+-}(x,y)$ and $\mathcal{B}_{-+}(x,y)$ which we define as: 
\begin{eqnarray}
\mathcal{B}_{++}(x,y)&\equiv& \mathcal{B}(0,\mathcal{B}(x,y)) \nonumber \\
&=&{\begin{cases} 1&{\text{if }}x\in [-y,y], y>0\\-1&{\text{if }}x\in [y,-y], y<0\\0&{\text{if }}x\notin [-|y|,|y|] \end{cases}} \label{bclos} \\
\mathcal{B}_{--}(x,y)&\equiv& 2\mathcal{B}(x,y)-\mathcal{B}_{++}(x,y)\nonumber \\
&=&{\begin{cases} 1&{\text{if }}x\in (-y,y), y>0\\-1&{\text{if }}x\in (y,-y), y<0\\0&{\text{if }}x\notin (-|y|,|y|) \end{cases}} \label{bope} \\
\mathcal{B}_{-+}(x,y)&\equiv& \mathcal{B}\left(0,|x+y|\mathcal{B}(x,y)\right)\nonumber \\
&=&{\begin{cases} 1&{\text{if }}x\in (-y,y], y>0\\-1&{\text{if }}x\in [y,-y), y<0\\0&{\text{if }}x\notin (-|y|,|y|] \end{cases}} \label{bminplus} \\
\mathcal{B}_{+-}(x,y)&\equiv& \mathcal{B}\left(0,|x-y|\mathcal{B}(x,y)\right)\nonumber \\
&=&{\begin{cases} 1&{\text{if }}x\in [-y,y), y>0\\-1&{\text{if }}x\in (y,-y], y<0\\0&{\text{if }}x\notin [-|y|,|y|) \end{cases}} \label{bplusmin}
 \end{eqnarray} 
 
\begin{figure*} 
\centering\includegraphics[width=6.9in]{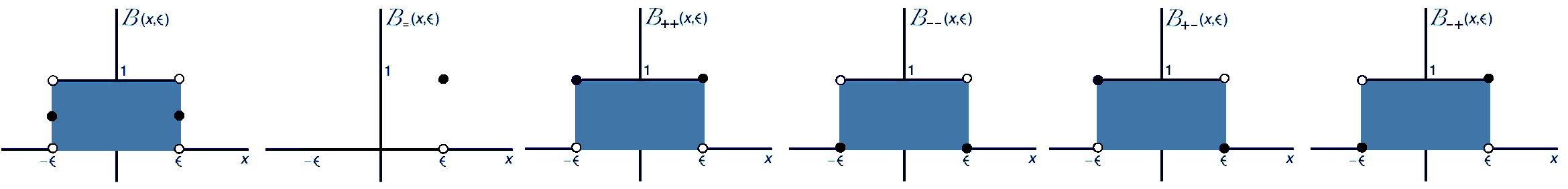}
\caption{\scriptsize{The functions $\mathcal{B}(x,y)$, $\mathcal{B}_{=}(x,y)$, $\mathcal{B}_{++}(x,y)$, $\mathcal{B}_{--}(x,y)$, $\mathcal{B}_{+-}(x,y)$ and $\mathcal{B}_{-+}(x,y)$, from left to right for $y=\epsilon >0$. }}
\label{boxen2}
\end{figure*}

We also note that thanks to the $\mathcal{B}$-function we can also generalize the Kronecker delta to arbitrary $x\in \mathbb{R}$ and $y \in \mathbb{R}$ in terms of the function
\begin{eqnarray}
\mathcal{B}_{=}(x,y)&\equiv& \mathcal{B}\left(\mathcal{B}(x,y)-\mathcal{B}(y,x),\frac{1}{2} \right)\nonumber \\
&=&  {\begin{cases} 1&{\text{if }}x =y \qquad x, y \in \mathbb{R} \\ 0 &{\text{if }} x\ne y \qquad  x, y \in \mathbb{R} \end{cases}} \label{d1KroEXT}
\end{eqnarray}


We now discuss how the above functions can be used to describe arbitrary crisp sets over the real numbers. Any such set can be seen as a collection of isolated points and/or intervals and, therefore, its characteristic function can be expressed in terms of a superposition of the above functions (by virtue of the trichotomy property of the real numbers). For example, the characteristic function $\text{Set}(x; [a,b])$ of the closed interval $[a,b]$ ($a<b$) on the real line  can be expressed as
\begin{equation}
\text{Set}(x; [a,b])=\mathcal{B}_{++}\left(x-\frac{b+a}{2},\frac{b-a}{2} \right)
\end{equation}
In Table \ref{tableset} below, we provide the appropriate $\mathcal{B}$-function to describe, in each case, usual elementary subsets of $\mathbb{R}$.

\begin{table}[htp]
\caption{Elementary subsets of $\mathbb{R}$ and expressions of their characteristic functions in terms of the relevant $\mathcal{B}$-functions.}
\begin{center}
\begin{tabular}{c|c|c}
& Relevant  &  \\
& $\mathcal{B}$-function(s) & Example(s)  \\
\hline
isolated point & $\mathcal{B}_{=}(x,y)$ & $\text{Set}(x; \{a \})=\mathcal{B}_{=}\left(x,a \right)$\\
\hline
set of $n$  & $\mathcal{B}_{=}(x,y)$ & $\text{Set}(x; \{a_{1},\ldots,a_{n} \})$\\
 isolated points   & & $=\sum_{k=1}^{n} \mathcal{B}_{=}\left(x,a_{k} \right)$ \\
\hline
open  & $\mathcal{B}_{--}(x,y)$ & $\text{Set}(x; (a,b))$\\
interval & &$=\mathcal{B}_{--}\left(x-\frac{b+a}{2},\frac{b-a}{2} \right)$\\
\hline
closed  & $\mathcal{B}_{++}(x,y)$ & $\text{Set}(x; [a,b])$\\
interval & &$=\mathcal{B}_{++}\left(x-\frac{b+a}{2},\frac{b-a}{2} \right)$\\
\hline
 & $\mathcal{B}_{+-}(x,y)$  & $\text{Set}(x; [a,b))$\\
semiopen & or &$=\mathcal{B}_{+-}\left(x-\frac{b+a}{2},\frac{b-a}{2} \right)$\\
 intervals & $\mathcal{B}_{-+}(x,y)$ & $\text{Set}(x; (a,b])$\\
 & &$=\mathcal{B}_{-+}\left(x-\frac{b+a}{2},\frac{b-a}{2} \right)$\\
 \hline
\end{tabular}
\end{center}
\label{tableset}
\end{table}%

For example, the characteristic function of the closed rectangle $R$ in $\mathbb{R}^2$ with vertices at $(0,0)$, $(a,0)$, $(0,b)$ and $(a,b)$ is given by
\begin{eqnarray}
\text{Set}(x,y; R)&=&\text{Set}(x; [0,a])\text{Set}(y; [0,b]) \nonumber \\
&=&\mathcal{B}_{++}\left(x-\frac{a}{2},\frac{a}{2} \right)\mathcal{B}_{++}\left(y-\frac{b}{2},\frac{b}{2} \right)
\end{eqnarray}

As another example, the characteristic function of an open disk containing all those points $(x,y)$ in the plane $\mathbb{R}^{2}$ for which $x^{2}+y^{2}<R^{2}$ is
\begin{equation}
\text{Set}(x,y;\ x^{2}+y^{2}<R^{2})=\mathcal{B}_{--}\left(x^{2}+y^{2}, R^{2} \right) \label{opendisk}
\end{equation}

The above table can be used to systematically find the characteristic function of any other crisp set. We now prove a useful result which we shall need to establish the algebra of fuzzy sets.

\begin{theor} \label{Utheo} Let $x\in U$ be an element of the universe $U$, and let $S^{(1)}$,$\ldots$, $S^{(n)}$ denote a collection of $n$ sets of $U$. The characteristic functions $\text{\emph{Set}}\left(x; S_{= m}\right)$, $\text{\emph{Set}}\left(x; S_{\le m}\right)$ and $\text{\emph{Set}}\left(x; S_{> m}\right)$ of the elements $x$ that \emph{exactly} belong to $m$ sets, to \emph{no more than $m$} sets and to \emph{more than} $m$ sets of the collection are, respectively, given by 
\begin{eqnarray}
\text{\emph{Set}}\left(x; S_{= m}\right)&=&\mathcal{B}\left(m-\sum_{j=1}^{n}\text{\emph{Set}}\left(x; S^{(j)}\right), \varepsilon \right)   \label{subsets1} \\
\text{\emph{Set}}\left(x; S_{\le m}\right)&=&\mathcal{B}\left(\frac{m+1}{2}-\sum_{j=1}^{n}\text{\emph{Set}}\left(x; S^{(j)}\right), \frac{m}{2} \right)  \label{subsets2} \\
\text{\emph{Set}}\left(x; S_{> m}\right)&=&\mathcal{B}\left(\frac{n+m+1}{2}-\sum_{j=1}^{n}\text{\emph{Set}}\left(x; S^{(j)}\right), \frac{n-m}{2}\right)\nonumber \\ && \label{subsets3}
\end{eqnarray}
$\forall \varepsilon \in \left.\left(0,\frac{1}{2}\right.\right], \varepsilon \in \mathbb{R}$.
\end{theor}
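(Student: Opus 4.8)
The plan is to reduce all three identities to statements about the single integer-valued quantity
\begin{equation*}
N(x) \equiv \sum_{j=1}^{n}\text{Set}\left(x; S^{(j)}\right),
\end{equation*}
which, since each $\text{Set}(x; S^{(j)})$ is Boolean, simply counts how many of the $n$ sets contain $x$; thus $N(x)\in\{0,1,\ldots,n\}$ and the three membership conditions ``$x$ lies in exactly $m$ sets'', ``in no more than $m$ sets'', and ``in more than $m$ sets'' translate literally into $N(x)=m$, $N(x)\le m$, and $N(x)>m$. Everything then follows by reading off the value of the relevant $\mathcal{B}$-function at the integer $N(x)$.

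For Eq.~(\ref{subsets1}) I would argue directly from the Kronecker-delta reduction, Eq.~(\ref{d1Kro}). Because $m-N(x)\in\mathbb{Z}$ and $\varepsilon\in(0,\tfrac12]$, that reduction gives $\mathcal{B}(m-N(x),\varepsilon)=\delta_{\,m-N(x),\,0}$, which is $1$ precisely when $N(x)=m$ and $0$ otherwise, exactly the characteristic function of $S_{=m}$. No boundary subtlety arises here, since the second argument already lies in the admissible window $(0,\tfrac12]$.

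For Eqs.~(\ref{subsets2}) and (\ref{subsets3}) the natural tool is the inequality-to-identity device of Eq.~(\ref{interesting1}): a double inequality $r<N<q$ is equivalent to $\mathcal{B}\!\left(N-\tfrac{q+r}{2},\tfrac{q-r}{2}\right)=1$. For ``no more than $m$'' I would write the condition on the integer $N(x)$ as the strict inequality $\tfrac12<N(x)<m+\tfrac12$ (equivalently $1\le N(x)\le m$) and read off $r=\tfrac12$, $q=m+\tfrac12$, which reproduces the arguments $\tfrac{m+1}{2}-N(x)$ and $\tfrac{m}{2}$ of Eq.~(\ref{subsets2}), using that $\mathcal{B}$ is even in its first argument. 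Likewise, ``more than $m$'' becomes $m+\tfrac12<N(x)<n+\tfrac12$ (equivalently $m+1\le N(x)\le n$), giving $r=m+\tfrac12$, $q=n+\tfrac12$ and hence the arguments of Eq.~(\ref{subsets3}).

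The one point that needs genuine care, and where I expect the only real work to lie, is verifying that these $\mathcal{B}$-functions return honest Boolean values rather than the boundary value $\tfrac12$. By Eq.~(\ref{d1}), $\mathcal{B}(a,b)$ equals $\tfrac12$ whenever $|a|=b$, so I must confirm that the integer $N(x)$ never makes the magnitude of the first argument equal to the second. This is precisely why the half-integer offsets $\tfrac{m+1}{2}$ and $\tfrac{n+m+1}{2}$ are chosen: solving $|a|=b$ forces $N(x)=\tfrac12$ or $N(x)=m+\tfrac12$ in the first case, and $N(x)=m+\tfrac12$ or $N(x)=n+\tfrac12$ in the second, none of which is an integer. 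Hence the endpoints are always skipped, $\mathcal{B}$ evaluates to $1$ strictly inside the interval and to $0$ outside, and the stated characteristic functions follow for every admissible $\varepsilon$.
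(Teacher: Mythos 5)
Your proposal is correct and follows essentially the same route as the paper's own proof: both reduce the three claims to inequalities on the integer count $N(x)=\sum_{j=1}^{n}\text{Set}\left(x;S^{(j)}\right)$ and read off the value of the $\mathcal{B}$-function, the only cosmetic difference being that you invoke the packaged identities Eq.~(\ref{d1Kro}) and Eq.~(\ref{interesting1}) where the paper unwinds the definition Eq.~(\ref{d1}) directly (your inequalities $\tfrac12<N<m+\tfrac12$ and $m+\tfrac12<N<n+\tfrac12$ are exactly the paper's after rearrangement). Your explicit check that the half-integer offsets keep the first argument away from the boundary value $\tfrac12$ makes precise a point the paper leaves implicit, and note that, exactly as in the paper's derivation, what Eq.~(\ref{subsets2}) actually characterizes is $1\le N(x)\le m$ rather than the literal $N(x)\le m$ (the case $N(x)=0$ yields $0$), a looseness inherited from the theorem's informal wording rather than a flaw in your argument.
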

\begin{proof} 

From Eq. (\ref{d1}), we find $\mathcal{B}\left(m-\sum_{j=1}^{n}\text{Set}\left(x; S^{(j)}\right), \varepsilon \right)=1$, for any $\varepsilon \in \left.\left(0,\frac{1}{2}\right.\right], \varepsilon \in \mathbb{R}$, if $-\varepsilon<m-\sum_{j=1}^{n}\text{Set}\left(x; S^{(j)}\right)<\varepsilon$. Thus $\sum_{j=1}^{n}\text{Set}\left(x; S^{(j)}\right)=m\in \mathbb{N}$, i.e. $x$ exactly belongs to $m$ sets. Otherwise, $\mathcal{B}\left(m-\sum_{j=1}^{n}\text{Set}\left(x; S^{(j)}\right), \frac{1}{2} \right)=0$. Thus, Eq. (\ref{subsets1}) follows.


We now note that $\mathcal{B}\left(\frac{m+1}{2}-\sum_{j=1}^{n}\text{Set}\left(x; S^{(j)}\right), \frac{m}{2} \right)=1$ when
$-m<m+1-2\sum_{j=1}^{n}\text{Set}\left(x; S^{(j)}\right)<m$, and zero otherwise. Let  $k\equiv \sum_{j=1}^{n}\text{Set}\left(x; S^{(j)}\right)$. Then $0<1+2(m-k)<2m$, i.e. $1 \le k \le m$.  This means that $x$ is in the collection in less than $m$ sets. This proves Eq. (\ref{subsets2}).

If, now, $\mathcal{B}\left(\frac{n+m+1}{2}-\sum_{j=1}^{n}\text{Set}\left(x; S^{(j)}\right), \frac{n-m}{2}\right)=1$, then $-n+m<  n+m+1-2\sum_{j=1}^{n}\text{Set}\left(x; S^{(j)}\right) <n-m$. Let $k\equiv \sum_{j=1}^{n}\text{Set}\left(x; S^{(j)}\right)$. Thus, $0<  2(n-k)+1    <2(n-m)$ , i. e.  $m<k$.
Then $\sum_{j=1}^{n}\text{Set}\left(x; S^{(j)}\right)>m$, i.e $x$ belongs to more than $m$ sets of the collection. This proves Eq. (\ref{subsets3}). 
\end{proof}


\begin{corol} \label{inex} The characteristic function $ \text{\emph{Set}}\left(x; \bigcup_{j=1}^n S^{(j)}\right)$ of the union of $n$ sets $S^{(1)}$,$\ldots$, $S^{(n)}$ is
\begin{eqnarray}
 \text{\emph{Set}}\left(x; \bigcup_{j=1}^n S^{(j)}\right)  &=& \mathcal{B}\left(\frac{n+1}{2}-\sum_{j=1}^{n}\text{\emph{Set}}\left(x; S^{(j)}\right),\ \frac{n}{2} \right) \nonumber \\ && \label{mainU} 
\end{eqnarray}
\end{corol}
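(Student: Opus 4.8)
The plan is to recognize this corollary as the special case $m=0$ of Theorem~\ref{Utheo}. First I would invoke the elementary set-theoretic identity that an element $x$ lies in $\bigcup_{j=1}^{n} S^{(j)}$ precisely when it belongs to \emph{at least one}, i.e. to strictly \emph{more than zero}, of the sets in the collection. In the notation of the theorem this reads $\bigcup_{j=1}^{n} S^{(j)} = S_{>0}$. It then remains only to substitute $m=0$ into Eq.~(\ref{subsets3}), whose right-hand side becomes $\mathcal{B}\!\left(\frac{n+1}{2}-\sum_{j=1}^{n}\text{Set}(x; S^{(j)}),\, \frac{n}{2}\right)$, matching Eq.~(\ref{mainU}) exactly.

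To make the specialization self-contained, I would verify directly that the stated $\mathcal{B}$-function is the characteristic function of the union. Writing $k \equiv \sum_{j=1}^{n}\text{Set}(x; S^{(j)}) \in \{0,1,\ldots,n\}$ for the number of sets to which $x$ belongs, Eq.~(\ref{d1}) gives $\mathcal{B}\!\left(\frac{n+1}{2}-k,\, \frac{n}{2}\right)=1$ iff $-\frac{n}{2} < \frac{n+1}{2}-k < \frac{n}{2}$. The left inequality reduces to $k < n+\tfrac{1}{2}$, which holds for every admissible $k$, while the right inequality reduces to $k > \tfrac{1}{2}$, i.e. $k \ge 1$. Hence the function equals $1$ exactly when $x$ belongs to one or more of the sets, and $0$ otherwise, which is precisely $\text{Set}(x; \bigcup_{j=1}^{n} S^{(j)})$.

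The only point requiring care — and what would otherwise be the main obstacle — is the behaviour at the boundary, since the $\mathcal{B}$-function returns $\pm\tfrac{1}{2}$ rather than $0$ or $1$ when the magnitudes of its two arguments coincide. I would observe that $\left|\frac{n+1}{2}-k\right| = \frac{n}{2}$ forces $k = \tfrac{1}{2}$ or $k = n+\tfrac{1}{2}$, neither of which is an integer; since $k$ is always a nonnegative integer, the boundary case never arises and $\mathcal{B}$ assumes only the crisp values $0$ and $1$. This is exactly the role played by the half-integer shift $\frac{n+1}{2}$ in the first argument. As an optional consistency check one could compare the resulting expression against the inclusion--exclusion formula, Eq.~(\ref{inextrape}), but the specialization of Theorem~\ref{Utheo} already furnishes a complete and considerably shorter proof.
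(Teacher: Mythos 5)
Your proposal is correct and follows essentially the same route as the paper, whose proof is precisely the specialization $m=0$ in Eq.~(\ref{subsets3}) (the paper also notes the equivalent choice $m=n$ in Eq.~(\ref{subsets2})). Your additional direct verification via Eq.~(\ref{d1}), including the observation that the boundary values $\pm\tfrac{1}{2}$ never occur because $k=\tfrac{1}{2}$ and $k=n+\tfrac{1}{2}$ are not integers, is a sound supplement but does not alter the argument's structure.
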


\begin{proof} The result follows by taking $m=n$ in Eq. (\ref{subsets2}) or $m=0$ in Eq. (\ref{subsets3}). 

\end{proof}

\begin{corol} \label{inex2} 
\begin{equation}
 \text{\emph{Set}}\left(x; S^{(j)}\right)  = \mathcal{B}\left(1-\text{\emph{Set}}\left(x; S^{(j)}\right),\ \frac{1}{2} \right) \label{mainUdos}
\end{equation}
\end{corol}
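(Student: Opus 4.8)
The plan is to recognize Eq.~(\ref{mainUdos}) as the degenerate single-set instance of the union formula already established in Corollary~\ref{inex}. Since the union of a single set equals that set itself, $\bigcup_{j=1}^{1} S^{(j)} = S^{(1)}$, I would simply set $n=1$ in Eq.~(\ref{mainU}). Substituting $n=1$ gives $\frac{n+1}{2}=1$ and $\frac{n}{2}=\frac{1}{2}$, while the single-term sum collapses to $\text{Set}(x; S^{(1)})$; the right-hand side of Eq.~(\ref{mainU}) thereby reproduces the right-hand side of Eq.~(\ref{mainUdos}) verbatim, and the left-hand side $\text{Set}\bigl(x; \bigcup_{j=1}^{1} S^{(j)}\bigr)=\text{Set}(x; S^{(j)})$ yields the claim.

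As an independent check---and to confirm that the $\varepsilon=\tfrac12$ boundary behavior is handled correctly---I would verify the identity directly by exhausting the two possible values of the characteristic function. Writing $s \equiv \text{Set}(x; S^{(j)}) \in \{0,1\}$, the claim reduces to $s = \mathcal{B}(1-s,\tfrac12)$. When $s=0$, the first argument is $1$, and since $|1|>\tfrac12$, the defining piecewise rule of Eq.~(\ref{d1}) returns $\mathcal{B}(1,\tfrac12)=0=s$. When $s=1$, the first argument is $0$, and since $|0|<\tfrac12$, Eq.~(\ref{d1}) returns $\mathcal{B}(0,\tfrac12)=\text{sign}(\tfrac12)=1=s$. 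Both cases agree, establishing the identity.

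I do not anticipate any genuine obstacle: the result is essentially a notational corollary of the $n=1$ case. The only point requiring a moment's care is confirming that the critical value $|x|=|y|=\tfrac12$ is never attained by the argument $1-s$ for $s\in\{0,1\}$, so that the $\pm\tfrac12$ branch of the $\mathcal{B}$-function plays no role; indeed $1-s\in\{0,1\}$ lies always strictly inside or strictly outside the interval $(-\tfrac12,\tfrac12)$, never on its boundary. This is precisely why any $\varepsilon\in(0,\tfrac12]$ would serve equally well, consistent with the freedom already recorded in Theorem~\ref{Utheo}.
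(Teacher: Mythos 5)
Your proposal is correct and follows exactly the paper's own route: the paper's proof of Corollary~\ref{inex2} is precisely the substitution $n=1$ into Eq.~(\ref{mainU}), together with the observation that the label $j$ is arbitrary. Your supplementary case-by-case verification ($s\in\{0,1\}$ against the definition in Eq.~(\ref{d1})) is a sound extra check, and your remark that the argument $1-s$ never hits the boundary value $\tfrac12$ correctly explains why any $\varepsilon\in\left.\left(0,\tfrac12\right.\right]$ would work, but neither addition changes the substance of the argument.
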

\begin{proof} Take $n=1$ in Eq. (\ref{mainU}) and note that $S^{(1)}$ may indeed refer to any set $S^{(j)}$ in the union since the $n$ different values of the integer label $j$ can be freely attributed. 
\end{proof}

Eq. (\ref{mainU}) is an alternative, equivalent expression of the inclusion-exclusion principle, Eq. (\ref{inextrape}).

Let us show an application of Theorem \ref{Utheo} to find the characteristic function of the set formed by the region in the plane bounded by a regular polygon of $p$ sides with apothem $L$. Let the point $(x_0,y_0)$ denote the center of the polygon in the plane $\mathbb{R}^2$. We begin by noting that the following inequality is valid for any point $(x,y)$ over a half-plane
\begin{equation}
L+(x-x_0) \cos \left(\frac{2k\pi}{p}+\varphi_{0}\right) 
+(y-y_0) \sin \left(\frac{2k\pi}{p}+\varphi_{0}\right)>0
\end{equation}
bounded by a line which is rotated an angle $2k\pi /p$, with $k=0,1,\ldots p-1$ ($\varphi_{0}$ sets the angle of the $k=0$ line with the $x$ axis), and which is at a distance $L$ of the origin. The above inequality can be translated into an identity by means of the $\mathcal{B}$-function as
\begin{eqnarray}
&&\mathcal{B}\left(0,  L +(x-x_0) \cos \left(\frac{2k\pi}{p}+\varphi_{0}\right) \right. \nonumber \\
&&\qquad \qquad \left.+(y-y_0) \sin \left(\frac{2k\pi}{p}+\varphi_{0}\right) \right)= 1 
\end{eqnarray}
and this corresponds to the characteristic function of the set of points belonging to the half-plane. There are $p$ such different lines, each obtained for the $p$ different values of $k$. All added together bound a polygonal domain, whose interior is the set we are seeking. If we add $N$ of these lines we have a line arrangement that breaks the plane into portions where the function takes different integer values. Any of these arrangements is given by
\begin{eqnarray}
&&\sum_{k=0}^{N-1}\mathcal{B}\left(0,  L +(x-x_0) \cos \left(\frac{2k\pi}{p}+\varphi_{0}\right) \right. \nonumber \\
&&\qquad \qquad \left.+(y-y_0) \sin \left(\frac{2k\pi}{p}+\varphi_{0}\right) \right) \label{stpolyint}
\end{eqnarray}
This function is plotted in Fig. \ref{pentagon} in the $x$-$y$ plane for $p=5$, $\varphi_0=x_0=y_0=0$, $L=2$ and $N=1$ (A), $N=2$ (B), $N=3$ (C), $N=4$ (D) and $N=5$ (E). The numerical integer values that the function takes on the plane are indicated. A polygon of $p$ sides (in this case a pentagon) is formed in the region where all half-planes intersect.
Therefore, a point within that region belongs exactly to $p$ sets (half-planes) in the collection. By taking $\varepsilon=1/2$ (for example), Eq. (\ref{subsets1}) automatically provides us with the characteristic function of all points in the plane that belong to the interior of the polygon
\begin{eqnarray}
&&\text{Polygon}(x, y; p, L)  \equiv \label{polyint} \\
 && \mathcal{B}\left(p-\sum_{k=0}^{p-1}\mathcal{B}\left(0,  L+(x-x_0) \cos \left(\frac{2k\pi}{p}+\varphi_{0}\right) \right.\right. \nonumber \\
 &&\qquad \qquad \qquad \qquad \left.\left.+(y-y_0) \sin \left(\frac{2k\pi}{p}+\varphi_{0}\right)\right),\frac{1}{2} \right) \nonumber
\end{eqnarray}
Therefore, the above function characterizes a regular polygon $P$ of $p$ sides with apothema $L$, rotated an angle $\varphi_0$ and centered at $(x_{0},y_{0})$. The function above returns $+1$ if $(x,y)\in \mathbb{R}^{2}$ is in the interior of $P$ and zero otherwise.

\begin{figure} 
\includegraphics[width=0.5\textwidth]{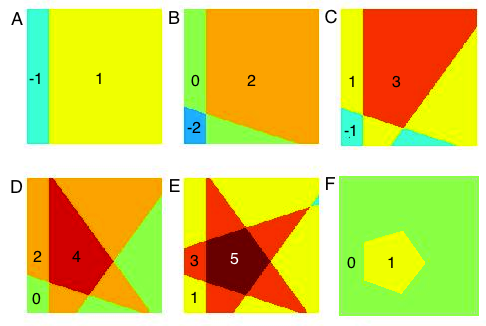}
\caption{\scriptsize{Plot of Eq. (\ref{stpolyint}) in the $x-y$ plane for $p=5$, $L=2$, $\varphi_{0}=0$ as the terms in the r.h.s. of Eq. (\ref{stpolyint}) are being added: 
values of $k=$0  (A), 1 (B), 2 (C), 3 (D), 4 (E). The numbers indicated clarify the integer values of the function in the color scheme used. (F) Plot of the function $\text{Poly}_{5}(x, y; 2, 0)$ given by Eq. (\ref{polyint}) on the plane.}} \label{pentagon}
\end{figure}

The function $\text{Polygon}(x, y; p, L)$ and the approach sketched above used to obtain it is illustrated in Fig. \ref{pentagon} for $p=5$. The 

At the bottom right of the figure (the panel within the box), the result Eq. (\ref{polyint}) is plotted for $p=5$, $L=2$, $(x_{0}, y_{0})=(4,4)$. The pentagon $\text{Poly}_{5}(4, 4, 2, 0)$ shown in the figure is given by Eq. (\ref{polyint}).



\section{Fuzzy sets: The $\kappa$-deformation of the $\mathcal{B}$-function} \label{fuzzytheo}

We now introduce $\mathcal{B}_{\kappa}$-function, $\kappa \in \mathbb{R}$, $\kappa \in (0,\infty)$, which acts as a one-parameter family of deformations of the $\mathcal{B}$-function, and is defined as
\begin{equation}
\mathcal{B}_{\kappa}(x,y)\equiv \frac{1}{2}\left( 
\tanh\left(\frac{x+y}{\kappa} \right)-\tanh\left(\frac{x-y}{\kappa} \right)
\right) \label{bkappa}
\end{equation}
In general, $0\le \left|\mathcal{B}_{\kappa}(x,y)\right| \le 1$. Indeed, for $y >0$ we have $0\le \mathcal{B}_{\kappa}(x,y) \le 1$. We also have $\mathcal{B}_{\kappa}(-x,y)=\mathcal{B}_{\kappa}(x,y)$ and $\mathcal{B}_{\kappa}(x,-y)=-\mathcal{B}_{\kappa}(x,y)$. Furthermore, if we allow $\kappa$ to be negative, $\mathcal{B}_{-\kappa}(x,y)=-\mathcal{B}_{\kappa}(x,y)$. For all finite values of the real variables $x$ and $y$, the $\mathcal{B}_{\kappa}$-function satisfies \cite{JPHYSA}:
\begin{eqnarray}
\lim_{\kappa \to \infty}\mathcal{B}_{\kappa}\left(x,\ y\right)&=& 0 \qquad  \qquad \
\lim_{\kappa \to \infty}\frac{\mathcal{B}_{\kappa}\left(x,\ y\right)}{\mathcal{B}_{\kappa}\left(0,\ y\right)}=1 \label{lim1} \\
\lim_{\kappa \to 0}\mathcal{B}_{\kappa}\left(x,\ y\right)&=& \mathcal{B}(x,y) \label{lim2}  \qquad \lim_{\kappa \to 0}\mathcal{B}_{\kappa}\left(0,\ y\right)= \text{sign}\ y \nonumber \\
&&
\end{eqnarray}
The function $\mathcal{B}_{\kappa}\left(x,\ y\right)$ is analytic both in $x$ and $y$.
Thus the $\mathcal{B}$-function is recovered from the $\mathcal{B}_{\kappa}$ function in the limit $\kappa \to 0$ (pointwise convergence). In the limit $\kappa \to \infty$ it is vanishingly small everywhere. However, note that, quite importantly
\begin{equation}
\frac{1}{2y}\int_{-\infty}^{\infty} dx \mathcal{B}_{\kappa}\left(x,\ y\right)=1
\end{equation}
a result that is independent of $\kappa$. Thus, if we take $y=\frac{1}{2}$, we can interpret $p(x)=\mathcal{B}_{\kappa}\left(x,\ \frac{1}{2} \right)$ as a probability distribution of $x \in \mathbb{R}$.
Because of all above properties, we refer to $\kappa$ as the \emph{fuzziness parameter}. When $\kappa \to 0$ the limit of crisp sets is obtained and for $\kappa$ increasing, the sets become more and more `vaguely' defined.

\begin{figure*} 
\centering\includegraphics[width=5.5in]{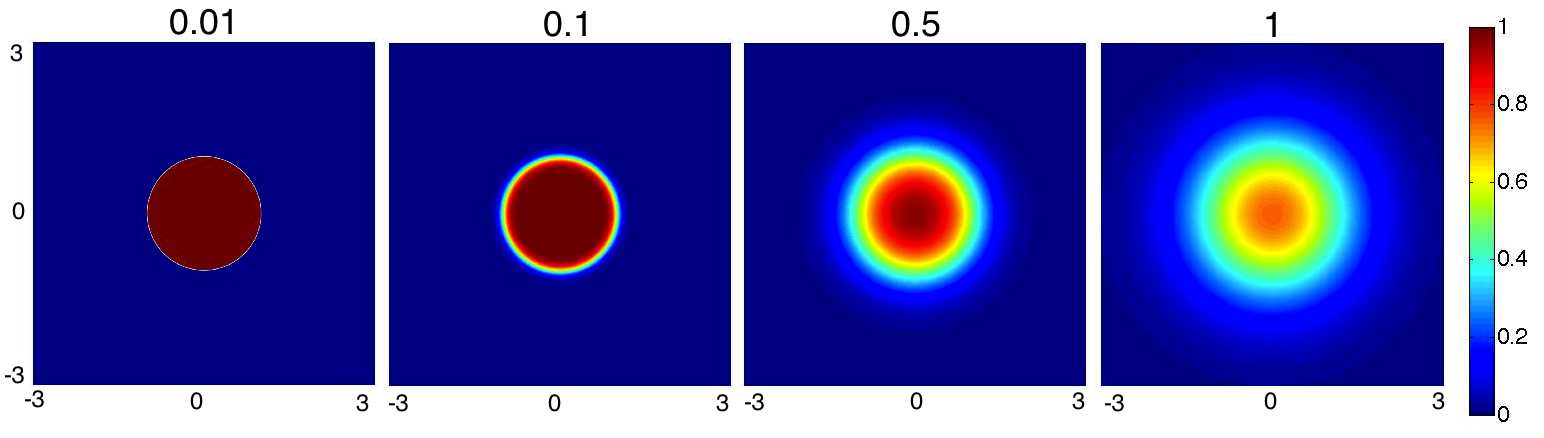}
\caption{\scriptsize{The fuzzy characteristic function $\text{Set}_{\kappa}(x,y;\ x^{2}+y^{2}<R^{2})$ corresponding to a disk of radius $R=1$ on the plane, for increasing values of the fuzziness parameter $\kappa$ indicated on the panels. }}
\label{fucir}
\end{figure*}

 The splitting and block-coalescence properties, Eqs. \ref{splito} and \ref{induin} also hold for the $\mathcal{B}_{\kappa}$-function \cite{JPHYSA}
\begin{eqnarray}
\mathcal{B}_{\kappa}(x,y+z)&=&\mathcal{B}_{\kappa}(x+y,z)+\mathcal{B}_{\kappa}(x-z,y) \label{splitk}\\
\sum_{k=0}^{n-1}\mathcal{B}_{\kappa}(x-2ky,y)&=&\mathcal{B}_{\kappa}(x-(n-1)y,ny) \label{induink}
\end{eqnarray}
for any $x, y, z \in \mathbb{R}$, $n\in \mathbb{N}\cup\{0 \}$ and for every $\kappa$.

The results in the previous section can now be fruitfully exploited to construct fuzzy sets. Let $S$ be a set for which a characteristic function $\text{Set}\left(x; S\right)$ can be defined in terms of the $\mathcal{B}_{++}$, $\mathcal{B}_{+-}$, $\mathcal{B}_{-+}$, $\mathcal{B}_{--}$ functions described in the previous section. The fuzzy version of the characteristic function $\text{Set}_{\kappa}\left(x; S\right)$ is simply obtained by replacing any occurrence of any of these functions by $\mathcal{B}_{\kappa}$ (note that in passing to the fuzzy characteristic function, whether the original set was open or closed becomes irrelevant). For example, the function
\begin{equation}
 \text{Set}_{\kappa}(x,y;\ x^{2}+y^{2}<R^{2})=\mathcal{B}_{\kappa}\left(x^{2}+y^{2}, R^{2} \right) \label{opendiska}
\end{equation}
constitutes the fuzzy version of Eq. (\ref{opendisk}). In Fig. \ref{fucir} this function is plotted in the plane for a disk with $R=1$ and the $\kappa$ values indicated over the panels.

We are thus considering the mapping $\text{Set}_{\kappa}\left(x; S\right): U \to [0,1]$ of the universe $U$ (containing the crisp set $S$) to the unit interval $[0,1]\subseteq \mathbb{R}$. We then describe the \emph{fuzzification} process as the operation of using this $\kappa$-dependent mapping to replace the mapping $\text{Set}\left(x; S\right): U \to \{0,1\}$ in any context where the latter appears. 

We note that if $x\in S$ and $x'\notin S$, we have $\text{Set}_{\kappa}\left(x; S\right) > 1/2 > \text{Set}_{\kappa}\left(x'; S\right) $. In the limit $\kappa \to 0$ this mapping becomes $\text{Set}_{\kappa \to 0}\left(x; S\right): U \to \{0,1/2,1\}$. Note that $\text{Set}_{\kappa \to 0}\left(x; S\right)$ is compactly supported and differs from the characteristic function of the crisp set $\text{Set}\left(x; S\right)$ only for those elements for which $\text{Set}_{\kappa \to 0}\left(x; S\right)=1/2$ (i.e. at the boundaries of the crisp set).

The complement $\overline{S}$ of a fuzzy set $S$ has characteristic function given by
\begin{equation}
\text{Set}_{\kappa}\left(x;\overline{S}\right)=1-\text{Set}_{\kappa}\left(x; S\right)
\end{equation}
The fuzzy union of $n$ fuzzy sets $S_{\kappa}^{j}$, $j=1,\ldots, n$  is given in terms of their $\kappa$-characteristic function by
\begin{equation}
\text{Set}_{\kappa}\left(x; \bigcup_{j=1}^n S^{(j)}_{\kappa}\right) = \mathcal{B}_{\kappa}\left(\frac{n+1}{2}-\sum_{j=1}^{n}\text{Set}_{\kappa} \left(x; S^{(j)}\right),\ \frac{n}{2} \right) \label{fuzzyunion}
\end{equation}
The validity of the latter equality is warranted by the fact that the $\mathcal{B}_{\kappa}$ function also obeys the block coalescence property, Eq. (\ref{induink}) and this property is all what is needed to prove Eq.(\ref{fuzzyunion}), thus generalizing Eq. (\ref{mainU}) to fuzzy sets (see the proof of Eq. (\ref{mainU}) above). The fuzzy intersection of $n$ sets is given by
\begin{equation}
\text{Set}_{\kappa}\left(x; \bigcap_{j=1}^n S^{(j)}_{\kappa}\right) = \prod_{j=1}^{n}\text{Set}_{\kappa}\left(x; S^{(j)}\right)
\end{equation}

Crisp sets can also be constructed out of fuzzy sets by using the tools introduced in the previous section. For example, starting from a fuzzy set $S_{\kappa}$, the crisp closed set with characteristic function
\begin{equation}
\mathcal{B}_{++}\left(\frac{1}{2}-\text{Set}_{\kappa}\left(x; S\right), \epsilon \right)
\end{equation}
is equal to one for all those elements that satisfy $\frac{1}{2}-\epsilon \le \text{Set}_{\kappa}\left(x; S\right)\le \frac{1}{2} + \epsilon$, i.e. those that lie on the `diffuse border' of the fuzzy set with characteristic function $\text{Set}_{\kappa}\left(x; S\right)$ up to tolerance $\epsilon$.

\section{Application: design of nonlinear time-series and symmetry-breaking patterns} \label{dynamicdays}

The switching function is important in electrical engineering in the analysis of power circuits \cite{Marouchos}. The basic (unipolar) switching function corresponds to a train of square pulses, each pulse having thickness $2\delta$ and periodicity $T$. Within the pulse, the switching function takes value '1' being equal to zero in the time span between pulses. The switching function is usually presented in the form of an (infinite) Fourier series (see, e.g. \cite{Marouchos}, p. 7). This function is represented in Fig. \ref{marouch} 
\begin{figure} 
\centering\includegraphics[width=3.2in]{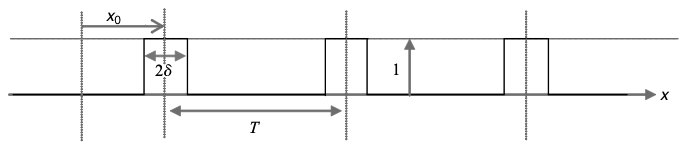}
\caption{\scriptsize{A sketch of the switching function. }}
\label{marouch}
\end{figure}

\begin{figure*} 
\centering\includegraphics[width=6.9in]{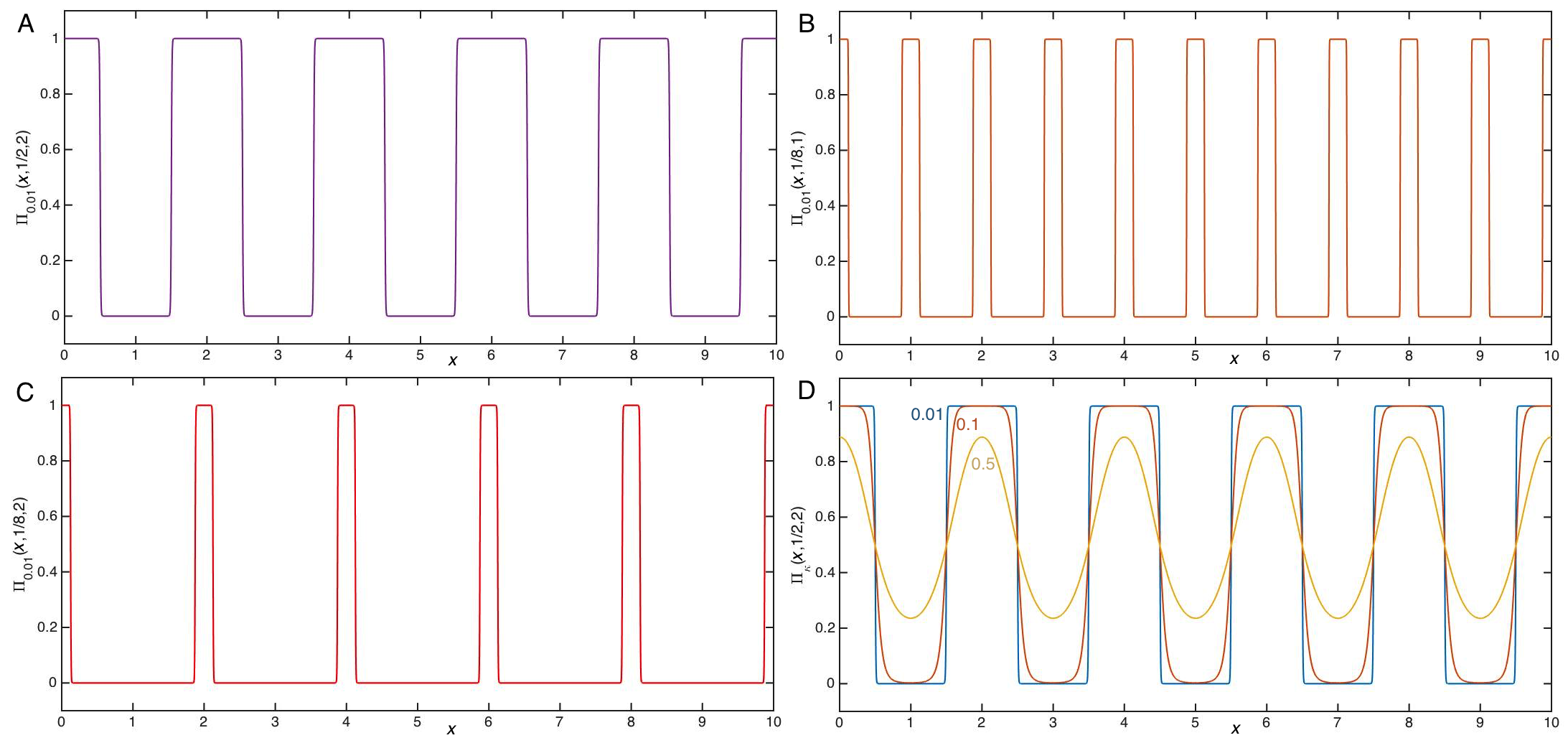}
\caption{\scriptsize{The $\kappa$-switching function $\Pi_{\kappa}(x,y; T)$ for specific values of the pulse thickness $y$, periodicity $T$ and smoothness parameter $\kappa$: $y=1/2$, $T=2$, $\kappa=0.01$ (A); $y=1/8$, $T=1$, $\kappa=0.01$ (B); $y=1/8$, $T=2$, $\kappa=0.01$ (C); $y=1/2$, $T=2$, and values of $\kappa$ indicated on the curves  (D).}}
\label{marouchi2}
\end{figure*}

By using the methods in this article, let us consider the universe $U=\mathbb{R}$ with $x\in \mathbb{R}$ being the time variable. We can then view the switching function as a characteristic (indicator) function of all those intervals in the real line for which $\ldots$,$-y-T<x<y-T$,  $-y<x<y$, $-y+T<x<y+T$, $\ldots$, being equal to 1 in those intervals and 0 everywhere else. Here $y=\delta$ denotes the pulse thickness. With the methods developed in Section \ref{mathprembfun}, it is easy to see that this characteristic function is then
\begin{equation}
\Pi(x,y;T)\equiv \mathcal{B}_{--}\left(\sin \frac{\pi x}{T}, \sin \frac{\pi y}{T} \right)
\end{equation}
from which we obtain the fuzzified version
\begin{eqnarray}
\Pi_{\kappa}(x,y; T)&\equiv& \mathcal{B}_{\kappa}\left(\sin \frac{\pi x}{T}, \sin \frac{\pi y}{T} \right) \label{ksw} \nonumber \\
&=& \frac{1}{2}\left( 
\tanh\left(\frac{\sin \frac{\pi x}{T}+\sin \frac{\pi y}{T}}{\kappa} \right)\right. \nonumber \\ && \left.-\tanh\left(\frac{\sin \frac{\pi x}{T}-\sin \frac{\pi y}{T}}{\kappa} \right)
\right)
\end{eqnarray}
The $\kappa$-switching function, Eq. (\ref{ksw}) is represented in Fig. \ref{marouchi2} for $\kappa=0.01$ (panels A, B, C) and other values (panel D) showing how an increased value of this parameter leads from a train of square pulses to a smoother periodic sine-like function. For vanishingly small $\kappa$, the switching function is a train of square pulses (Fig.\ref{marouchi2}A) having value 1 only in those intervals for which $-y-nT<x<y-nT$ with $n\in \mathbb{Z}$ and zero everywhere else. By adjusting $y$ to a different value, while keeping $T$ constant the thickness of the square pulses changes keeping the same periodicity (Fig.\ref{marouchi2}C). 

The $\kappa$-switching function can be used for the design of useful non-linear time series with arbitrary waveforms without need of expressing them in terms of Fourier components. Let $t$ denote the time variable and let $f(t)$ be an arbitrary function of time. Then, the function $f(t-t_0)\Pi_{\kappa}(t-t_0,\delta; T)$ is equal, for vanishing $\kappa$ to $f(t)$ in the intervals $-\delta-nT+t_0<t<\delta-nT+t_0$ with $n\in \mathbb{Z}$, and it is zero otherwise. For $\kappa \to \infty$, $f(t-t_0)\Pi_{\kappa}(t-t_0,\delta; T)$ is vanishingly small at every $t$ and a smooth function. However, we can change this limiting behavior as $\kappa \to \infty$ by using, Eq. (\ref{lim1}) and we can rather construct the signal
\begin{equation}
F(t)=f(t-t_0)\frac{\Pi_{\kappa}(t-t_0,\delta; T)}{\Pi_{\kappa}(0,\delta; T)}
\end{equation}
The behavior as $\kappa \to 0$ is the same as for the signal $f(t-t_0)\Pi_{\kappa}(t-t_0,\delta; T)$. In the limit $\kappa \to \infty$, because of Eq. (\ref{lim1}), we have $F(t) \to f(t-t_0)$. Thus, by tuning $\kappa$, one can construct a continuous signal composed of fragments of any arbitrary signal $f(t-t_0)$ which is periodically sampled at time intervals of length $T$ and whose samples are continuosly and analytically joined. Note that  $\Pi_{\kappa}(nT,\delta; T)=1=\Pi_{\kappa}(0,\delta; T)$ at all values $t=t'$ for which $t'-t_0=nT$, $n \in \mathbb{Z}$ and, therefore, $F(t_0+nT)=f(t_0+nT)$ regardless of the value of $\kappa$. Since the $\kappa$-switching function inherits the parity properties of the $\mathcal{B}_{\kappa}$-function, we have $\Pi_{\kappa}(x,y;T)+\Pi_{\kappa}(x,-y;T)=0$ a fact that can be used, e.g., to describe destructive interference phenomena involving complicated nonlinear time series as the one constructed above.

\begin{figure*} 
\centering\includegraphics[width=6.9in]{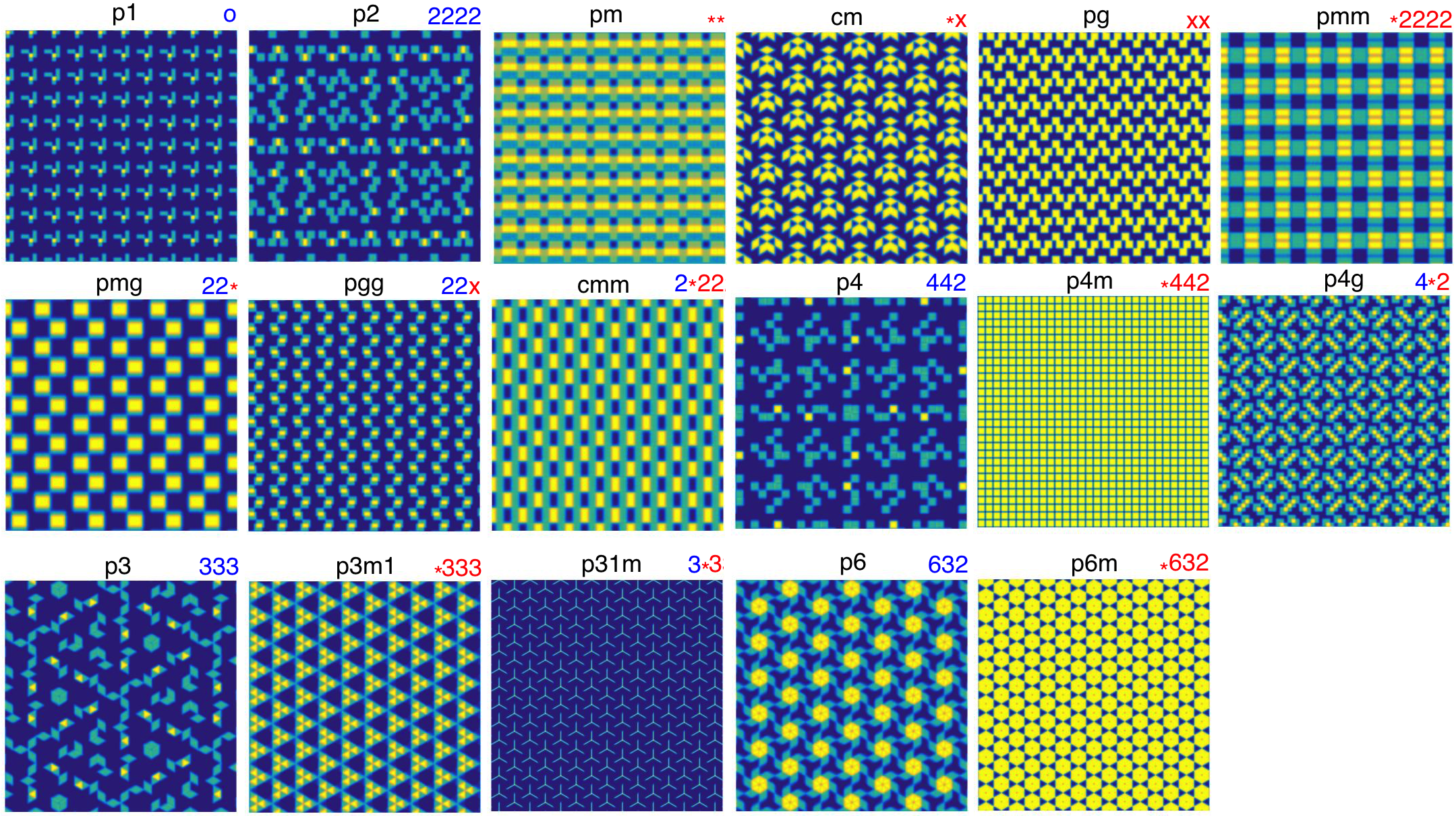}
\caption{\scriptsize{Spatial patterns obtained from Eq. (\ref{wallfun}) on the infinite plane. Shown is a window of size $30 \times 30$. The parameter values used to obtain the patterns are given in Table \ref{tableW}.}}
\label{wallp}
\end{figure*}

The above-defined $\kappa$-switching function can also be used to create spatial patterns that possess symmetries according to the 17 planar symmetry groups \cite{Farris,Conway2,Liu}. These symmetry groups are best understood from topological considerations and are most elegantly captured by Conway's orbifold notation \cite{Conway2,Conway1}. Let $p$ denote a centre of $p$-fold rotation corresponding to a cone point of the orbifold. By the crystallographic restriction theorem, $p=2,3,4$ or $6$ if one is to regularly fill the plane in terms of a repeated motif or cell (fundamental domain) so that the resulting arrangement is both ordered and periodic. Let $x$ and $y$ denote spatial variables on the cartesian plane and let $m$ and $n$ be parameters governing the spatial periodicity (translation symmetry) on these directions. We can then define the wallpaper function $W_{m,n,\kappa}^{(p)}(x,y)$ ($m, n\in \mathbb{R}$, $p\in \mathbb{N}$) as
\begin{eqnarray}
W_{m,n, \kappa}^{(p)}(x,y)&\equiv&\frac{1}{p}\sum_{k=1}^{p}\Pi_{\kappa}\left(\widetilde{x}_{k},a;m \right)\Pi_{\kappa}\left(\widetilde{y}_{k},b; n \right) \label{wallfun}
\end{eqnarray}
where
\begin{eqnarray}
\widetilde{x}_{k}&\equiv& (x-c_r)\cos\left(\frac{2\pi k }{p} \right)-(y-c_i)\sin\left(\frac{2\pi k}{p} \right)-a \nonumber \\
\widetilde{y}_{k}&\equiv& (x-d_r)\cos\left(\frac{2\pi (k+1)}{p} \right) \nonumber \\
&&-(y-d_i)\sin\left(\frac{2\pi (k+1)}{p} \right)-b 
\end{eqnarray}
with $a,b,c_r,c_i,d_r,d_i \in \mathbb{R}$ being parameters. Eq. (\ref{wallfun}) is expressed in terms of two $\kappa$-switching functions that govern the periodicity of the patterns in any spatial directions rotated on angles $2\pi k/p$, with $k=0,1\ldots, p-1$. We note that, rendering wallpaper patterns customarily involve a sum over a truncated Fourier series on different spatial directions \cite{Farris,Verberck}. In Eq. (\ref{wallfun}) this sum is always finite, and already allows to render patterns with all possible 17 symmetry groups by varying a set of a few parameters. In Fig. \ref{wallp} some of these patterns are shown for a window of $30 \times 30$ size on an infinite lattice. The wallpaper groups involved in the patterns are each specified by their crystallographic and orbifold notations, which are given in each case. In Table \ref{tableW}, the parameter values used in Eq. (\ref{wallfun}) to reproduce the patterns are given.

\begin{table}[htp]
\caption{Parameter values for the wallpaper functions $W_{m,n, \kappa}^{(p)}(x,y)$ in Fig. \ref{wallp}. Here, $c\equiv c_{r}+ic_{i}$ and $d\equiv d_{r}+id_{i}$.}
\begin{center}
\begin{tabular}{cc|cccccccc}
Wallpaper & Orbifold  					& $p$ &   $n$ & $m$ & $a$ & $b$ & $c$ & $d$ \\
group        & notation 					&         &        &         &        &         &        &       \\
\hline
p1             & {\color{blue} o} 				&  4    &     3   &   3   &   1    &  $\frac{1}{2}$   &   0 & $\frac{1}{2}$ \\ 
p2             & {\color{blue} 2222} 			&  4    &     3   &   4   &   1    &  1      		&   1 & $\frac{1}{2}$ \\ 
pm             & {\color{red} **}	 			&  4    &     3   &   3   &   2    &  2      		&   1 & 0 \\ 
cm             & {\color{red} *x}	 			&  6    &     4   &   4   &   1    &  1      		&   0 & 2 \\ 
pg             & {\color{red} xx}	 			&  4    &     3   &   3   &   1    &  1      		&   2 & $\frac{1}{2}$+2i \\ 
pmm         & {\color{red} *2222} 			&  4    &     4   &   2   &   1    &  2      		&   1 & 0 \\ 
pmg         & {\color{blue} 22}{\color{red}*} 		&  4    &     5   &   5   &   2    &  2      		&   $\frac{3}{4}$ & $\frac{3}{4}$+$\frac{5}{2}$i \\ 
pgg         & {\color{blue} 22}{\color{red}x} 		&  4    &     3   &   3   &   1    &  1      		&   1 & $\frac{3}{4}$+$\frac{9}{2}$i \\ 
cmm        & {\color{blue} 2}{\color{red}*22}		&  4    &     4   &   4   &   2    &  2      		&   $\frac{1}{2}$ & $\frac{3}{2}$ \\ 
p4             & {\color{blue} 442} 				&  4    &     3   &   4   &   1    &  1      		&   1 & $\frac{1}{2}$ \\
p4m             & {\color{red} *442} 			&  4    &     1   &   1   &   1    &  1      		&   0 & 0 \\
p4g        & {\color{blue} 4}{\color{red}*2}		&  4    &     3   &   2   &   1    &  1      		&   0 & $\frac{1}{2}$+$\frac{1}{2}$i \\ 
p3             & {\color{blue} 333} 				&  3    &     3   &   4   &   1    &  1      		&   0 & 0 \\
p3m1             & {\color{red} *333} 			&  3    &     3   &   3   &   2    &  1      		&   0 & 0 \\
p31m        & {\color{blue} 3}{\color{red}*3}		&  3    &     2   &   2   &   1    &  4      		&   0 & 0 \\ 
p6             & {\color{blue} 632} 				&  6    &     4   &   4   &   1    &  2      		&   0 & 0 \\
p6m             & {\color{red} *632} 			&  6    &     2   &   2   &   1    &  1      		&   0 & 0 \\
\end{tabular}
\end{center}
\label{tableW}
\end{table}%

In contrast with other methods, our formula for generating all wallpaper patterns consist of a finite number $p$ of terms. By varying $\kappa$, or the function parameters, or by adding linear combinations of the wallpaper functions, an infinite variety of wallpaper patterns with any planar symmetry group can be created. By introducing additional $\kappa$-switching functions in the product, patterns on higher dimensional spaces can be generated as well, e.g. the 230 space groups (this construction shall be presented elsewhere).
By selecting $p$ different to 2, 3, 4 or 6, ordered patterns that are not periodic, typical of quasicrystals \cite{Shechtman}, can be generated. For $p=5$, for example, since it is impossible to tile the plane with pentagons, quasicrystal patterns are formed. The latter completely fill the plane continuously, but lack any translational symmetry.
In Fig. \ref{quas} quasicrystal-like patterns are shown, as it is typical of wallpaper functions with $p=5$, for two different values of $\kappa$ indicated over the panels. These patterns are generated from Eq. (\ref{wallfun}) with $p=5$, $n=m=a=b=1$, $c_r=c_i=d_r=d_i=0$. At $\kappa$ large (left panel), the patterns are smoother while at $\kappa$ small (right panel) the patterns aproach a planar surface broken into pieces (tiles).  

\begin{figure} 
\centering\includegraphics[width=3.3in]{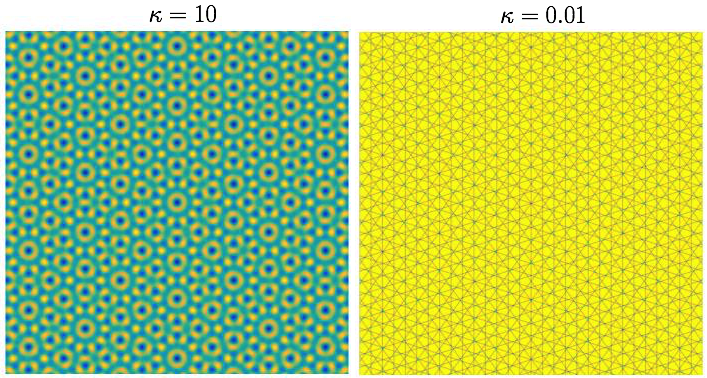}
\caption{\scriptsize{Spatial plot of the wallpaper function $W_{1,1, \kappa}^{(5)}(x,y)$ (i.e. $m=n=1$, $p=5$) for the values of $\kappa$ indicated over the panels. Other parameter values are $a=b=1$, $c_r=c_i=d_r=d_i=0$. Shown is a region of size $30\times 30$ of the infinite plane centered at the origin.}}
\label{quas}
\end{figure}

\begin{figure} 
\centering\includegraphics[width=3.3in]{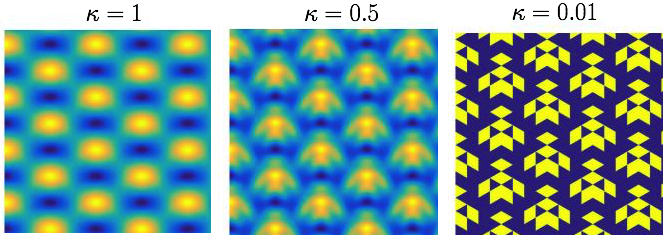}
\caption{\scriptsize{Spatial plot of the wallpaper function $W_{4,4, \kappa}^{(6)}(x,y)$ (i.e. $m=n=4$, $p=6$) for the values of $\kappa$ indicated over the panels. Other parameter values are $a=b=1$, $c_r=c_i=d_i=0$, $d_r=2$. Shown is a region of size $20\times 20$ of the infinite plane centered at the origin. }}
\label{symbr}
\end{figure}

The role of the parameter $\kappa$ is analogous to the one of temperature in physical systems. Symmetry breaking transitions can be induced by tuning this parameter and it is generally found that patterns with higher symmetry occur at $\kappa$ large, while less symmetric patterns are found at $\kappa$ lower.  In Fig. \ref{symbr} the spatial plot of the wallpaper function $W_{4,4, \kappa}^{(6)}(x,y)$ (i.e. $m=n=4$, $p=6$) for parameter values $a=b=1$, $c_r=c_i=d_i=0$, $d_r=2$ is shown in a region of the plane, for decreasing values of the parameter $\kappa$. It is observed that at $\kappa$ large ($\kappa=1$), the pattern has symmetry group cmm. However, as $\kappa$ is lowered ($\kappa=0.5$) one reflection axis is lost and the pattern collapses to the less symmetrical group cm. This less-symmetrical state remains as $\kappa$ is lowered more, the features of the pattern becoming sharper ($\kappa=0.1$).

Patterns with the frieze groups as symmetry groups extend orderly and periodically along one spatial dimension and can be directly obtained from the wallpaper functions defined above, by extracting a strip of the corresponding planar patterns. This is readily accomplished with the methods in Section \ref{mathprembfun} since extracting such a strip is analogous to multiplying by its characteristic function, which, for a strip centered at $(x,y)=(0,f)$ and with thickness $f$, is simply given by $\mathcal{B}\left(y-f,f\right)$. Therefore, we define the frieze function $F_{m,n}(x,y)$ ($m, n\in \mathbb{R}$) as
\begin{eqnarray}
F_{m,n,\kappa}^{(p)}(x,y)&\equiv&W_{m,n,\kappa}^{(p)}(x,y)\mathcal{B}\left(y-f,f\right) \label{wallfunF}
\end{eqnarray}
This `frieze' function allows patterns with all 7 frieze groups to be generated. In Fig. \ref{friezes}, examples of patterns with the symmetries of each different frieze group obtained from Eq. (\ref{wallfunF}) are shown, together with their cristallographic and Conway's orbifold notations. The parameter values used in Eq. (\ref{wallfunF}) to generate these patterns are indicated in Table \ref{tableW2}.

\begin{figure} 
\centering\includegraphics[width=3.3in]{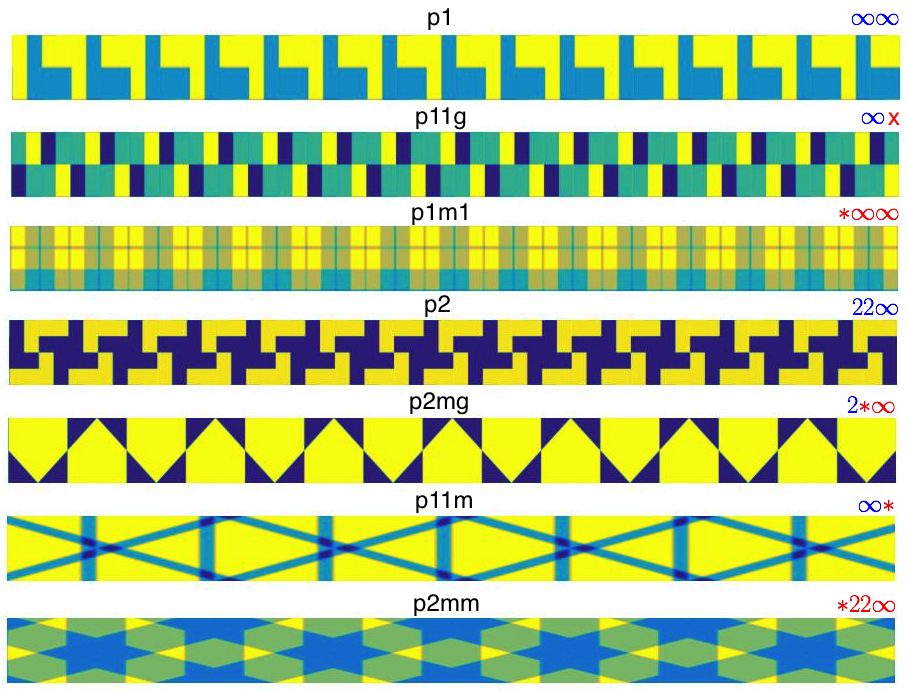}
\caption{\scriptsize{Spatial patterns obtained from Eq. (\ref{wallfunF}) on the plane. Shown is a window of size $30 \times f$. The value of $f$ and other parameter values  are given in Table \ref{tableW2}.}}
\label{friezes}
\end{figure}

\begin{table}[htp]
\caption{Parameter values for the frieze functions $F_{m,n, \kappa}^{(p)}(x,y)$ in Fig. \ref{friezes}. Here, $c_r=d_r=c_i=d_i=0$.}
\begin{center}
\begin{tabular}{cc|ccccccccc}
Frieze	& Orbifold  					& $p$ &   $n$ & $m$ & $a$ & $b$  & $f$ \\
group        & notation 					&         &          &         &        &          & \\
\hline
p1             & ${\color{blue} \infty \infty}$ 	        &  4    &     2   &   2   &   1      &  1.5  & 1\\ 
p11g          & ${\color{blue} \infty}${\color{red}x} &  4    &    1   &   2   &   0.5    &  1     & 1\\ 
p1m1          & ${\color{red} * \infty \infty}$  		&  4    &    1   &   2   &   1    &  1.5     & 1.5\\ 
p2             & ${\color{blue} 22\infty}$ 	        		&  4    &     2   &   2   &   1      &  0.5  & 2\\ 
p2mg          & ${\color{blue} 2}{\color{red}*\infty}$ &  6    &     2   &   2   &   1    &  1     & 1.15\\ 
p11m          & ${\color{blue} \infty}{\color{red}*}$  &  3    &     4   &   4   &   3.5    &  3.5     & 4.6\\ 
p2mm          & ${\color{red} *22\infty}$  		&  6    &     4   &   4   &   3    &  3     & 5\\
\end{tabular}
\end{center}
\label{tableW2}
\end{table}%

\section{Application: Shaping limit-cycle oscillations in nonlinear dynamical systems} \label{dynamicdays}

The concepts in Sections \ref{mathprembfun} and \ref{fuzzytheo} can also be applied to design limit-cycle oscillations far away from bifurcations of nonlinear dynamical systems, as we show in this section. 

In biological systems, physiologically significant solutions are frequently periodic \cite{Cronin,GoldbeterBOOK} and their periodicity can be established by showing that there is an appropriate bounded set into which all the stable physiologically meaningful solutions enter and remain. A mathematical model of a physiological situation may take the form of a system of (nonlinear) ordinary differential equations
\begin{equation}
\frac{d\mathbf{y}}{dt}=\mathbf{f}(\mathbf{y}) \label{dynagen}
\end{equation}
Here $\mathbf{y}$ is a vector containing all dynamical variables and $t$ denotes the time variable. Given an arbitrary smooth dynamical system of the form of Eq. (\ref{dynagen}) to show whether it has stable periodic solutions is a hard, open important problem \cite{Cronin}. Most works in the field concentrate on specific dynamical systems, establishing the existence of limit cycles for them. 

In two dimensions, there is sometimes the possibility of using the Poincar\'e-Bendixson theorem  \cite{Cronin,Strogatz,Lefschetz}, or the criteria of Bendixson \cite{Lefschetz} and Dulac \cite{Dulac} to establish the existence (resp. non-existence) of periodic orbits \cite{Lloyd}. However, these results do not say anything on the shape of the resulting limit-cycle \cite{Dutta} (e.g., whether the oscillatory behavior is sinusoidal or more relaxation-like), a question of great interest in the mathematical modeling of biological systems. A very recent work \cite{Dutta} addresses this problem of finding the shape and size of the limit cycle and uses renormalization group techniques \cite{Banerjee} to reach conclusions in the specific case of the Selkov model for glycolytic oscillations.  

The problem of generally establishing the shape of a limit cycle for a given dynamical system is, obviously, harder than only proving the existence of the limit cycle. In an attempt to address features of this problem, we can consider the following closely related question that proceeds in the contrary direction: \emph{Suppose that we experimentally measure a limit cycle of a certain shape in a two-dimensional system; can then we formulate a mathematical model that is able to capture that experimentally observed shape?} 

We give an affirmative answer in the form of a theorem below. It is to be noted that, although we use characteristic functions of fuzzy sets (as introduced in this article), the system given by Eq. (\ref{dynagen})  is purely deterministic. The fuzzification process is here used to construct the basin of attraction of the limit cycle (of predetermined shape). The basin of attraction extends to the whole plane $\mathbb{R}^2$, but initial conditions nearer to the limit cycle have lower velocity. The characteristic function of the fuzzy set explicitly enters in the construction of the velocity field $\frac{d\mathbf{y}}{dt}$ in the whole plane, as a function of the position vector $\mathbf{r}$. The velocity field is designed so that if the point $\mathbf{y}$ belongs to the interior of the domain $D$ bounded by the limit cycle, the velocity is positive (for sufficiently low values of the positive real parameter $\kappa$), being negative if $\mathbf{r}$ is outside of $D$. In this way, thanks to explicit dependence on the appropriate characteristic function of the relevant fuzzy set, a trapping region associated to a smooth, differentiable field, is designed: The velocity field is defined in terms of a smooth differentiable characteristic function that is, in turn, a smooth indicator function of its own support. In this way, the fuzzification process allows the velocity field to be closely tied to the behavior of its support (whose size can be tuned by means of the parameter $\kappa$). In the limit $\kappa \to 0$, the limit cycle is continuously deformed to the asymptotically known shape (given by the closed curve $\partial D$). Otherwise, for $\kappa$ finite and small enough, the limit cycle is smaller in size but homeomorphic to $\partial D$. The freely adjustable control parameter $\kappa$ is here to be thought as directly related to the relevant, experimental control parameter governing the system dynamics. 

Our result is thus inspired in the concept of a trapping region as introduced in the Poincar\'e-Bendixson theorem. Thanks to the fuzzification method introduced in the previous Section, we can  relate the trapping region to the differentiable structure that it supports (the velocity field). In summary, our result extends the Poincar\'e-Bendixson theorem to the possibility of designing limit cycles with a predefined shape.


\begin{theor} \label{theoshape}
Let $D$ be a domain of the plane $\mathbb{R}^2$ containing a point called the origin, placed at position $\mathbf{r}_{0}=(x_0,y_0)$, and bounded by a closed curve $\partial D \in C^1$. Let a point $\mathbf{r}$ in the plane be given by coordinates $\mathbf{r}=(x_{0}+r\cos \theta,y_{0}+r\sin \theta)$ with $r$ being the radius and $\theta$ the angle and let $t$ denote the time variable and $\omega$ a natural frequency. Then, the  dynamical system
\begin{eqnarray}
\frac{dr}{dt}&=&\left( \text{\emph{Set}}_{\kappa}\left(\mathbf{r}-\mathbf{r}_0; D\right)-\frac{1}{2}  \right)r \label{radius} \\
\frac{d\theta}{dt}&=&\omega \label{angle}
\end{eqnarray}
has a stable limit cycle in the asymptotic regime $\kappa \sim 0$ with shape asymptotically given by $\partial D$. In the limit $\kappa\to \infty$ the origin is a stable fixed point and there is no limit cycle. The dynamics experiences a bifurcation between these two regimes at intermediate $\kappa$ values, as $\kappa$ is lowered below a critical value $\kappa_{c}$ given by the equation $\text{\emph{Set}}_{\kappa_c}\left(\textbf{\emph{0}}; D\right) = \frac{1}{2}$. If, furthermore,
\begin{equation}
\text{\emph{Set}}_{\kappa_c}\left(\textbf{\emph{0}}; D\right)=\max \text{\emph{Set}}_{\kappa_c}\left(\textbf{\emph{r}}-\textbf{\emph{r}}_{0}; D\right) \label{conditwo}
\end{equation}
then, the bifurcation at $\kappa_{c}$ is a supercritical Andronov-Hopf bifurcation.
\end{theor}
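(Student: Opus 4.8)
The plan is to exploit the simple product structure of the flow: since $\dot\theta=\omega$ is a rigid rotation, all the dynamics is carried by the radial equation $\dot r=\bigl(\text{Set}_{\kappa}(\mathbf r-\mathbf r_0;D)-\tfrac12\bigr)r$, whose sign is governed by the single threshold $\text{Set}_{\kappa}=\tfrac12$. The point $r=0$ is a fixed point for every $\kappa$; away from it the flow expands wherever $\text{Set}_{\kappa}>\tfrac12$ and contracts wherever $\text{Set}_{\kappa}<\tfrac12$. First I would dispose of the two limiting regimes, then linearise at the origin to locate and classify the bifurcation.

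For $\kappa\to\infty$, the limit $\lim_{\kappa\to\infty}\mathcal B_{\kappa}(x,y)=0$ of Eq.~(\ref{lim1}) gives $\text{Set}_{\kappa}\to 0$ uniformly, hence $\dot r\to-\tfrac12 r<0$ for all $r>0$; every trajectory spirals monotonically into the origin, which is therefore a globally attracting stable focus with no limit cycle. For $\kappa\sim0$ I would use Eq.~(\ref{lim2}): $\text{Set}_{\kappa}$ tends to the crisp indicator of $D$, so $\dot r>0$ strictly inside $D$ and $\dot r<0$ strictly outside. Choosing an inner circle contained in $D$ and an outer circle enclosing $D$ produces an annulus on whose boundaries the flow points strictly outward (inner) and strictly inward (outer); since this annulus contains no fixed point (the only one, $r=0$, is excluded), the Poincar\'e-Bendixson theorem yields at least one periodic orbit, and the one-signed radial flow on either side makes it attracting and unique. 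Its shape is the vanishing locus $\{\text{Set}_{\kappa}=\tfrac12\}$ of $\dot r$, which converges to $\partial D$ as $\kappa\to0$; because the radial relaxation rate diverges relative to the fixed rotation rate $\omega$ in this limit, the orbit is slaved to that level set, so its shape tends to $\partial D$.

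To locate the bifurcation I would pass to Cartesian displacements $\tilde{\mathbf r}=\mathbf r-\mathbf r_0=(r\cos\theta,r\sin\theta)$, in which the system becomes $\dot{\tilde x}=g\,\tilde x-\omega\tilde y$, $\dot{\tilde y}=\omega\tilde x+g\,\tilde y$ with $g(\tilde{\mathbf r})=\text{Set}_{\kappa}(\tilde{\mathbf r};D)-\tfrac12$. The Jacobian at the origin is $\bigl(\begin{smallmatrix}\mu&-\omega\\\omega&\mu\end{smallmatrix}\bigr)$ with $\mu=\text{Set}_{\kappa}(\mathbf 0;D)-\tfrac12$, so the eigenvalues are $\mu\pm i\omega$. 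They cross the imaginary axis exactly when $\mu=0$, i.e.\ at the $\kappa_c$ defined by $\text{Set}_{\kappa_c}(\mathbf 0;D)=\tfrac12$; since $\mu$ increases as $\kappa$ is lowered (transversality, $d\mu/d\kappa\neq0$), the origin is a stable focus for $\kappa>\kappa_c$ and unstable for $\kappa<\kappa_c$, establishing a Hopf-type bifurcation with nonzero frequency $\omega$.

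It remains to fix the criticality. Expanding $g$ about the origin at $\kappa=\kappa_c$ and inserting into $\dot r=g\,r$, then averaging over the fast angle $\theta$, yields a reduced radial normal form $\dot{\bar r}=\mu\bar r+\tfrac14\,\mathrm{tr}(H)\,\bar r^{3}+\dots$, where $H$ is the Hessian of $\text{Set}_{\kappa_c}$ at the origin (the term linear in $\tilde{\mathbf r}$ averages to zero, so the gradient is immaterial at this order). Condition~(\ref{conditwo}) says the origin is the global maximum of $\text{Set}_{\kappa_c}$; hence its gradient vanishes and $H$ is negative semidefinite, so $\mathrm{tr}(H)\le0$ and the cubic coefficient---proportional to the first Lyapunov coefficient---is negative for a nondegenerate maximum. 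Equivalently, Eq.~(\ref{conditwo}) forces $\text{Set}_{\kappa_c}\le\tfrac12$ everywhere, whence $\dot r\le0$ at $\kappa_c$, so the origin remains weakly attracting at criticality---the signature of a supercritical Andronov-Hopf bifurcation, from which a small stable limit cycle emerges for $\kappa<\kappa_c$. The main obstacle I anticipate is the $\kappa\to0$ shape claim: proving the attracting closed curve is genuinely homeomorphic to $\partial D$ and converges to it requires controlling the level set $\{\text{Set}_{\kappa}=\tfrac12\}$ (ensuring it is a single simple closed curve, e.g.\ for $D$ star-shaped about $\mathbf r_0$) together with a singular-perturbation estimate of the lag between the slaved orbit and that level set; pinning the sign of the Lyapunov coefficient is the other delicate point.
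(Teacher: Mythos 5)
Your proposal is correct and follows essentially the same route as the paper's own proof: an annular trapping region plus the Poincar\'e-Bendixson theorem for $\kappa \sim 0$, monotone decay of $\text{Set}_{\kappa}$ to zero for $\kappa \to \infty$, the threshold condition $\text{Set}_{\kappa_c}\left(\mathbf{0}; D\right)=\frac{1}{2}$ locating the bifurcation, and the observation that Eq.~(\ref{conditwo}) forces $\text{Set}_{\kappa_c}\le \frac{1}{2}$ everywhere, hence $dr/dt\le 0$ at criticality, so the origin remains attracting there and the bifurcation is supercritical. Your explicit Jacobian eigenvalues $\mu \pm i\omega$ with transversality, and the averaged cubic coefficient $\frac{1}{4}\mathrm{tr}(H)$ (legitimate here precisely because Eq.~(\ref{conditwo}) kills the gradient, so no quadratic terms contaminate the Lyapunov coefficient), are refinements of what the paper argues only verbally, and your closing caveat about the $\kappa \to 0$ shape convergence honestly flags a point the paper itself glosses over --- indeed the paper even identifies the cycle with the level set $\left\{\text{Set}_{\kappa}=\frac{1}{2}\right\}$, which is exact only when that level set is a circle centered at the origin.
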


\begin{proof} The function $\text{Set}_{\kappa}\left(\mathbf{r}-\mathbf{r}_0; D\right)$ is continuous and differentiable for all finite non-vanishing $\kappa$ being also monotonically decreasing from the interior to the exterior of $D$. In the asymptotic regime $\kappa \sim 0$, there exist domains $D'$ and $D''$ in the plane such that $D'\subset D \subset D''$ and such that both $D'$ and $D''$ contain the origin, at position $\mathbf{r}_0$. Now, the radial velocity $dr/dt$ is positive on the boundary $\partial D'$ of $D'$, because $\text{Set}_{\kappa}\left(\textbf{r}; D'\right) >1/2$ while it is negative on the boundary $\partial D''$ of $D''$ where we have $\text{Set}_{\kappa}\left(\textbf{r}; D''\right) <1/2$. Thus, the difference set $D''\setminus D'$ is a connected set with interior boundary $\partial D'$ and exterior boundary $\partial D''$ and it does not contain the origin, being also a trapping region for the dynamics. Therefore, by the Poincar\'e-Bendixson theorem \cite{Strogatz,Bendixson,Teschl}, the trapping region contains a limit cycle (since there are no fixed points). The $\omega$-limit set thus approaches the shape $\partial D$  in the asymptotic regime $\kappa \sim 0$. 

Since we have, 
\begin{equation}
\frac{d\ \text{Set}_{\kappa}\left(\textbf{r}-\mathbf{r}_0; D\right)}{d\kappa} < 0
\end{equation}
with $\text{Set}_{\kappa}\left(\textbf{r}-\mathbf{r}_0; D\right) >0$, independently of the value $\mathbf{r}$, we find that, in the limit $\kappa \to \infty$, $\text{Set}_{\kappa}\left(\mathbf{r}-\mathbf{r}_0; D\right) \to 0$ in the whole plane, the radial velocity being negative everywhere. As a consequence, the origin attracts all trajectories in this case, being a stable fixed point. We note that since $D$ contains the origin, if we lower $\kappa$ from a high enough value where there is only this stable fixed point, we find that at the critical value $\kappa_c$ given by $\text{Set}_{\kappa}\left(\textbf{0}; D\right)=1/2$ the fixed point at the origin loses stability to the stable limit cycle which, for each $\kappa$ value is given by the equation $\text{Set}_\kappa\left(\mathbf{r}-\mathbf{r}_0; D\right)=1/2$. The limit cycle is homeomorphically deformed to the shape $\partial D$ as $\kappa \to 0$.

To prove the last part of the theorem,  note that if Eq. (\ref{conditwo}) is satisfied, we further have 
\begin{equation}
1/2=\text{Set}_{\kappa_c}\left(\textbf{0}; D\right)\ge \max \text{Set}_{\kappa}\left(\textbf{r}-\mathbf{r}_0; D\right) 
\end{equation}
for $\kappa > \kappa_{c}$. The equal sign only holds if $\kappa = \kappa_{c}$. Furthermore, $\text{Set}_{\kappa_c}\left(\textbf{0}; D\right)\ge \max \text{Set}_{\kappa_c}\left(\textbf{r}-\mathbf{r}_0; D\right)$ where the equal sign only holds if $\mathbf{r}=\mathbf{r}_0$. Therefore, the radial velocity is everywhere negative on the 
plane at criticality, save at the vicinity of the origin, which becomes a center. As $\kappa$  is then further lowered below the critical value $\kappa_{c}$, the origin loses stability through a supercritical Andronov-Hopf bifurcation \cite{Kuznetsov} to a tiny, stable limit cycle that emanates from the origin (by enclosing it). This limit cycle attracts all trajectories in phase space. 
\end{proof}

\noindent \emph{Remark 1}: If Eq. (\ref{conditwo}) is not fulfilled, a (homoclinic-like) global bifurcation generally takes place at bifurcation parameter $\kappa_{c}$ given by the expression $\text{Set}_{\kappa_c}\left(\textbf{\emph{0}}; D\right) = \frac{1}{2}$ instead of the (local) Andronov-Hopf bifurcation.\\

\noindent \emph{Remark 2}: If we wish to model an experimental situation where the limit cycle is found at increasing values of a control parameter $\alpha$ instead of decreasing ones, we could e.g. take $\kappa \sim 1/\alpha$ in the Theorem or introduce any other suitable parametrization.\\

\noindent \emph{Example}: Let $X$ and $Y$ represent the concentrations of adenosine diphosphate and fructose-6-phosphate, respectively, as in Selkov's model for glycolytic oscillations, and let us take $r=\sqrt{(X-X_0)^2+(Y-Y_0)^2}$ and $\theta=\arctan\frac{Y-Y_0}{X-X_0}$ in Eqs. (\ref{radius}) and (\ref{angle}), respectively. Here, the origin is located at $\mathbf{r}_{0}=(X_0,Y_0)=\left(b,  b/(c+b^2)\right)$ being given in terms of the parameters $b$ and $c$ of Selkov's model (see Eqs. (2.7) and (2.8) in \cite{Dutta}). Now, by using for $\partial D$ a mathematical approximate expression of a tilted and deformed ellipse fitting the limit-cycle shown in Fig. 6 in \cite{Dutta}, the characteristic function $\text{Set}\left(\mathbf{r}-\mathbf{r}_{0}; D\right)$ in Eq. (\ref{radius}) corresponds to the crisp open set $D$ of the interior bounded by the limit cycle $\partial D$. By the fuzzification method, we now replace that characteristic function by its fuzzy counterpart $\text{Set}_{\kappa}\left(\mathbf{r}-\mathbf{r}_{0}; D\right)$. By varying the control parameter $\kappa$, we can tune the size of the limit cycle and we would qualitatively capture with our model in Theorem 1 the  dynamics of Selkov's model of glycolytic oscillations for the whole parameter regime considered in \cite{Dutta}. The control parameter $\kappa$ here can be viewed as directly related to the parameter $c$ entering in Selkov's model (see Eq. (2.1) in \cite{Dutta}). \\

To better understand the example above providing illustrations of our theorem, let us study in detail a simpler abstract instance. We can take, e.g., the domain in the plane bounded by an ellipse with radii $a$ and $b$ given by the equation
\begin{eqnarray}
{\displaystyle \left({\frac {x}{a}}\right)^{2}+\left({\frac {y}{b}}\right)^{2}=1}  \label{elishape}
\end{eqnarray}
This ellipse is the shape $\partial D$ in our model. We have $\mathbf{r}_{0}=(0,0)$ and, therefore,  $x=r\cos\theta$ and $y=r\sin\theta$, and the domain $D$ enclosed by this curve is the open set with characteristic function
\begin{equation}
\text{Set}\left(\mathbf{r}; D\right)=\mathcal{B}_{--}\left(\left({\frac {r\cos \theta}{a}}\right)^{2}+\left({\frac {r\sin \theta}{b}}\right)^{2}, 1 \right)
\end{equation}
where use of the function $\mathcal{B}_{--}$ is made because we are dealing with the open set bounded by $\partial D$ (see Table \ref{tableset}). The fuzzification method then proceeds by replacing this characteristic function by its fuzzy counterpart, which amounts here to merely replace the $\mathcal{B}_{--}$ function by the $\mathcal{B}_{\kappa}$-function, as described in the previous section
\begin{equation}
\text{Set}_\kappa \left(\mathbf{r}; D\right)=\mathcal{B}_{\kappa}\left(\left({\frac {r\cos \theta}{a}}\right)^{2}+\left({\frac {r\sin \theta}{b}}\right)^{2}, 1 \right) \label{dynF}
\end{equation}
The set $D$ also contains the origin and, hence, $\text{Set}\left(\textbf{0}; D\right)=1$. Therefore, the theorem predicts that there is a stable fixed point at $\kappa \to \infty$, while a stable limit cycle with shape $\partial D$ given by Eq. (\ref{elishape}) is obtained in the limit $\kappa \to 0$. From the theorem, a bifurcation between both regimes occurs at the critical parameter value $\kappa_{c}$ given by $\text{Set}_{\kappa_c}\left(\textbf{0}; D\right)=\frac{1}{2}$, i.e. at $\kappa_c=1/\text{arctanh}(1/2)\approx 1.8205$. Furthermore, a simple calculation also shows that $\text{Set}_\kappa \left(\mathbf{0}; D\right) \ge \text{Set}_\kappa \left(\mathbf{r}; D\right)$  $\forall \mathbf{r}$. Therefore, Eq. (\ref{conditwo}) is also satisfied and the theorem predicts that the bifurcation is a supercritical Antonov-Hopf bifurcation .

\begin{figure*} 
\centering\includegraphics[width=6.75in]{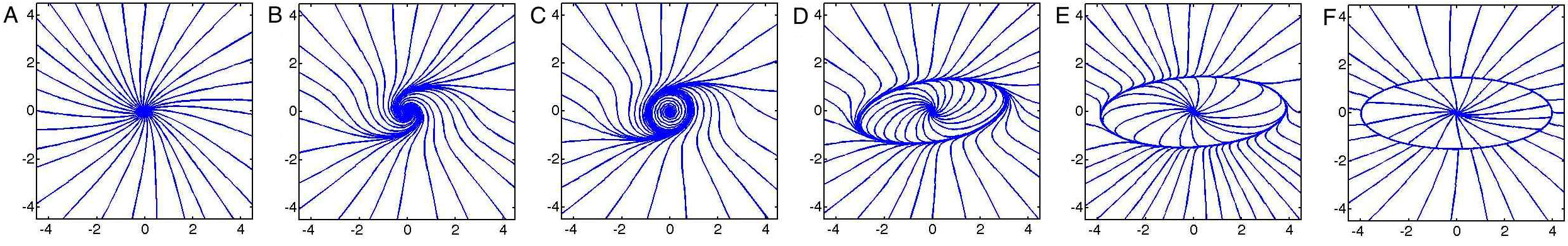}
\caption{\scriptsize{Phase portraits of the dynamical system given by Eqs.(\ref{radius}) and (\ref{angle}) with $\text{Set}_\kappa \left(\mathbf{r}; D\right)$ given by Eq. (\ref{dynF}) for the fuzziness parameter $\kappa$ values 5 (A), 1.9 (B), 1.8 (C), 1.5 (D), 1 (E), 0.1 (F). Other parameter values are $a=4$, $b=1.5$ and $\omega=0.05$. }}
\label{cicloss}
\end{figure*}

In Fig. \ref{cicloss} the phase portraits of the dynamical system, Eqs. (\ref{radius}) and (\ref{angle}), with $\text{Set}_\kappa \left(\mathbf{r}; D\right)$ given by Eq. (\ref{dynF}), are obtained by integrating both forward and backward in time starting from several different initial conditions on the plane and plotted for distinct $\kappa$ values. The origin is a stable fixed point/focus (panels A and B) and loses stability to a circle-shaped limit cycle at a supercritical Andronov-Hopf bifurcation found at $\kappa=\kappa_c$ with $\kappa_c=1/\text{arctanh}(1/2)\approx 1.8205$ (panel C) as predicted by the theorem. As the control parameter $\kappa$ is further decreased, the limit circle gradually deforms to an ellipse (panels D and E), so that at $\kappa=0.1$ already the border is very close to the limiting shape $\partial D$ which, in Fig. \ref{cicloss} corresponds to an ellipse with radii $a=4$ and $b=1.5$.  


\begin{figure} 
\centering\includegraphics[width=2.5in]{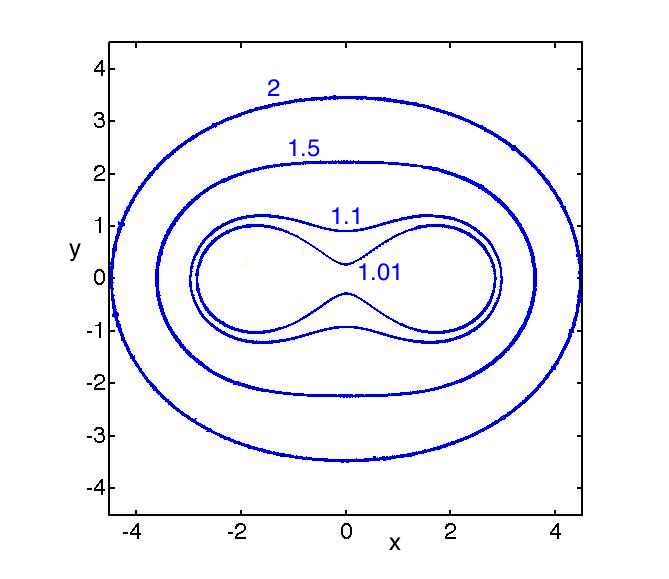}
\caption{\scriptsize{Limit cycles obtained in the long time limit for the dynamical system given by Eqs.(\ref{radius}) and (\ref{angle}) with $\text{Set}_\kappa \left(\mathbf{r}; D\right)$ given by Eq. (\ref{Cassik}) for $\kappa=0.1$, $a=1$, $\omega=0.02$ and four distinct values of $b$ indicated on the curves. }}
\label{Cassicy}
\end{figure}

We can still give another example of application of our theorem. Let us now consider the mathematical expressions for by Cassini ovals, which are defined by the set of points in the plane such that the product of the distances to two fixed points is constant. Cassini ovals are polynomial lemniscates of degree 2 given by the equation
\begin{equation}
r^{4}-2a^{2}r^{2}\cos 2\theta+a^{4}=b^{4} \label{Cass}
\end{equation}
where $a$ and $b$ are parameters. If $b<a$ the curve consists of two disconnected loops. For $b=a$ the curve becomes a Bernoulli lemmniscate with a double point at the origin \cite{Lawden,Basset}. When $b>a$ the curve is a single loop enclosing the origin and, in this case, our theorem applies. Let us then consider $b>a$, with $\partial D$ being given by Eq. (\ref{Cass}). The domain $D$ is, therefore, given by the inequality
\begin{equation}
r^{4}-2a^{2}r^{2}\cos 2\theta+a^{4}<b^{4} \label{Cass}
\end{equation}
Again, we take $\mathbf{r}_{0}=(0,0)$. The above open set is thus specified by the Boolean characteristic function
\begin{equation}
\text{Set}\left(\mathbf{r}; D\right)=\mathcal{B}_{--}\left(r^{4}-2a^{2}r^{2}\cos 2\theta+a^{4}, b^{4} \right)
\end{equation}
from which the fuzzy version is directly obtained as
\begin{equation}
\text{Set}_{\kappa}\left(\mathbf{r}; D\right)=\mathcal{B}_{\kappa}\left(r^{4}-2a^{2}r^{2}\cos 2\theta+a^{4}, b^{4} \right) \label{Cassik}
\end{equation}

In Fig. \ref{Cassicy} the $\omega$-limit sets of the flow given by Eqs.(\ref{radius}) and (\ref{angle}) with $\text{Set}_\kappa \left(\mathbf{r}; D\right)$ given by Eq. (\ref{Cassik}) are plotted in the plane for $\kappa=0.1$, $a=1$ and the values of $b$ indicated on the curves. These are all limit cycles and we observe that for $a < b < a\sqrt{2}$ the limit cycle has the shape of a peanut and for $a\sqrt{2} <2$ it bounds a convex domain. Thus, the limit cycles coincide with the curves given by Eq. (\ref{Cass}) in the asymptotic regime $\kappa$ small, as predicted by our theorem.

If we define the complex number $W\equiv \left|W\right|e^{i\theta}$, with $r=|W|^2$, we have
\begin{eqnarray}
\frac{dW}{dt}&=&\frac{d\left|W\right|}{dt}e^{i\theta}+i\left|W\right|e^{i\theta}\frac{d\theta}{dt} =\frac{e^{i\theta}}{2\left|W\right|} \frac{dr}{dt}+i\omega W \nonumber \\
&=&\left(\frac{1}{2\left|W\right|^2} \frac{dr}{dt}+i\omega\right)W=\left(\frac{1}{2r} \frac{dr}{dt}+i\omega \right)W\nonumber\\
&=&\frac{1}{2}\left(\text{Set}_{\kappa}\left(\mathbf{r}; D\right)-\frac{1}{2}+2\omega i      \right)W
\end{eqnarray}
and, therefore, by defining $t'\equiv t/2$, $\omega'\equiv 2\omega$ and dropping the tildes we finally obtain
\begin{equation}
\frac{dW}{dt}=\left(\text{Set}_{\kappa}\left(\mathbf{r}; D\right)-\frac{1}{2}+i\omega       \right)W \label{STG}
\end{equation}
with $\mathbf{r}=\left(\frac{W+W^*}{2}-x_0,\frac{W-W^*}{2i}-y_0 \right)$ where $W^*$ denotes the complex conjugate of $W$. This equation can be viewed as a generalization of the Stuart-Landau equation \cite{KuramotoBOOK,PRK,contemphys} to non-sinusoidal limit-cycle oscillators. Note that, because of the $W^*$ dependence, Eq. (\ref{STG}) is not generally invariant under the phase transformation $W\to We^{i\xi}$ where $\xi$ is an arbitrary constant phase: Phase invariance is broken as a consequence of the anharmonicity of the oscillations.


\section{Conclusion} \label{conclu}

In this work we have presented a new approach to fuzzy sets in terms of new definitions of the characteristic functions on sets and operations on them, and we have presented several nontrivial applications of our theory. The first of them concerns the stablishment of a periodic switching function that is easy and fast to compute and that can be used to design a wide variety of nonlinear time series ranging from trains of square pulses to smooth waves. When this switching function operates on spatial variables, we have shown how it can be used to systematically generate patterns with symmetries according to all planar frieze and wallpaper groups, as well as quasicrystalline patterns. The fuzzification parameter plays an analogous role to temperature in physical applications regarding phase transitions: patterns with lower symmetry are obtained at lower values of $\kappa$. These results may be interesting in diverse fields as the generation of aesthetic patterns through mathematical formulae, electrical engineering, condensed matter physics (phase transitions), etc.  

Another important application of our theory has been the establishment of a theorem on the shaping of limit-cycle oscillators in smooth nonlinear dynamical systems. The result may be of interest in the mathematical modeling of nonlinear oscillatory phenomena of biological interest, as those involving relaxation oscillators \cite{Ermentrout} (in which the shape of the limit cycle strongly departs from a circle). We have used the characteristic function of a fuzzy set to model the velocity field in the whole plane. The loci where this fuzzy characteristic function is equal to $1/2$ are connected forming a closed curve, and correspond to points with vanishing radial velocity. The fuzzy characteristic function allows the velocity to vary continuously on the plane, yielding a thoroughly differentiable flow. It is important to clarify that the 'fuzziness'  introduced in the statement of this result has nothing to do with probability and/or noise: the limit cycles here obtained \emph{always} constitute the sharp boundaries of an open crisp set corresponding to the locations enclosed by the limit cycle. However, while the characteristic function of a crisp set takes only values 0 and 1 and is, therefore, useless to define any differentiable structure, the \emph{continuous range} of the fuzzy characteristic function allows to smoothly merge together the open crisp set of locations with the velocity field. This strategy has the major advantage of yielding a limit cycle with tunable size and shape which is directly controlled by the only adjustable parameter $\kappa$. We have briefly sketched how our result relates to very recent literature on oscillators of biological interest \cite{Dutta}, although it is difficult at this stage to establish a closer relationship because our approach follows an opposite direction: Instead of trying to determine the shape of the limit cycle of an specific dynamical system, we take the asymptotic shape as given (e.g. by experimental measurement) and use this information in looking for a mathematical model that is able to describe the dynamics of a whole bifurcation scenario as the control parameter is varied. We have discussed in detail two specific, albeit abstract, examples involving elliptic limit cycles and attracting Cassini ovals. 

The approach to fuzzy sets presented makes use of the one-parameter family of functions called $\mathcal{B}_{\kappa}$-functions that we have recently introduced \cite{JPHYSA}. The fuzziness parameter $\kappa$ governs how much a smooth set departs from a `crisp' set. We have given expressions for the negation, union and intersection of fuzzy sets that are all different to the original formulation of fuzzy sets by Zadeh \cite{Zadeh}. Computation with our expressions is also straightforward and does not make use of the maximum/minimum operators for the union/intersection of fuzzy sets, constituting an alternative to Zadeh's theory. We note that Zadeh's theory can be reproduced through the methods in Section \ref{mathprembfun} since the maximum and minimum operators acting on characteristic functions $\chi_{A}\equiv \text{Set}(x; A)$, $\chi_{B}\equiv \text{Set}(x; B)$ of two sets $A$ and $B$ can be defined as
\begin{eqnarray}
&&\max\left(\chi_{A},\chi_{B}\right)=\frac{\chi_{A}+\chi_{B}}{2}+\frac{\chi_{A}-\chi_{B}}{2}\mathcal{B}\left(0,\chi_{A}-\chi_{B}\right) \nonumber \\ && \label{maxi} \\
&&\min\left(\chi_{A},\chi_{B}\right)=\frac{\chi_{A}+\chi_{B}}{2}-\frac{\chi_{A}-\chi_{B}}{2}\mathcal{B}\left(0,\chi_{A}-\chi_{B}\right) \nonumber \\ &&\label{mini}
\end{eqnarray}
We note, however, that although these operators lead to continuous characteristic functions these have not continuous derivatives. On the contrary, the fuzzification method proposed in this manuscript to obtain fuzzy sets out of crisp ones, lead to infinitely differentiable characteristic functions since these use all the $\mathcal{B}_{\kappa}$-function as building block and $\mathcal{B}_{\kappa}(x,y)$ is infinitely differentiable over $x$, $y$ and $\kappa$ (excluding $\kappa=0$).

With the help of fuzzy sets and the theorem on nonlinear oscillators proved here, we have established a differential equation for nonlinear oscillators in the complex plane, Eq. (\ref{STG}), in terms of a complex order parameter $W$. This equation can be viewed as a generalization of the Stuart-Landau equation  \cite{KuramotoBOOK,PRK,contemphys} to non-sinusoidal oscillators. Note that a Hilbert transform can be performed on periodic experimental signals bringing the dynamical system to a phase and amplitude representation \cite{PRK}. The empirical analysis of the shape of the oscillations as a function of the experimental control parameter, when brought in connection with the general model Eq. (\ref{STG}), may help to get insight in complex nonlinear phenomena. For example, an ensemble of non-sinusoidal oscillators, each individually described by Eq. (\ref{STG}) can be diffusively coupled so that a generalized Ginzburg-Landau equation is obtained. In this way, the impact on the spatiotemporal dynamics of the anharmonicity of the oscillations can be systematically studied.  

Although it constitutes a \emph{local} model at each point in phase space, Eq. (\ref{STG}) establishes a way in which \emph{global} information of the phase space affects the local dynamics as the control parameter $\kappa$ is being varied. Thus fuzzy sets, as presented here, allow global properties of the phase space of deterministic dynamical system to be addressed.



\begin{thebibliography}{9}

\bibitem{Zadeh}
L. A. Zadeh, Fuzzy sets. Inform. Control \textbf{8}, 338-353 (1965).
\bibitem{Klaua}
D. Klaua, \"Uber einen Ansatz zur mehrwertigen Mengenlehre. Monatsb. Deutsch. Akad. Wiss. Berlin \textbf{7}, 859-876 (1965).
\bibitem{Goguen}
J. A. Goguen, The logic of inexact concepts. Synthese \textbf{19}, 325-373 (1969).
\bibitem{ZimmermannREV}
H.-J. Zimmermann, Fuzzy set theory. WIREs Comp. Stats. \textbf{2}, 317 (2010). 
\bibitem{Zimmermann}
H.-J. Zimmermann, \emph{Fuzzy set theory and its applications} (Kluwer, Boston, 2001).
\bibitem{VGM1} V.~Garc{\'\i}a-Morales, Universal map for cellular automata. Phys. Lett. A {\bf 376}, 2645 (2012).
\bibitem{JPHYSA}
V.~Garc{\'\i}a-Morales, From deterministic cellular automata to coupled map lattices.
  J. Phys. A.: Math. Theor. {\bf 49}, 295101 (2016).
\bibitem{homotopon}
V.~Garc{\'\i}a-Morales, Nonlinear embeddings: Applications to analysis, fractals and polynomial root finding. Chaos Sol. Fract. {\bf 99}, 312 (2017).
\bibitem{Marouchos}
C. C. Marouchos. \emph{The switching function: analysis of power electronic systems}, (The Institution of Engineering and Technology, London, 2008).
\bibitem{Farris}
F. A. Farris. \emph{Creating Symmetry: The Artful Mathematics of Wallpaper Patterns}, (Princeton University Press, Princeton, 2015).
\bibitem{Ermentrout}
G. B. Ermentrout and D. Terman. \emph{Mathematical Foundations of Neuroscience} (Springer, New York, 2010).
\bibitem{vanderpol}
B. van der Pol. On relaxation oscillations I. Phil. Mag., \textbf{2}, 978-992, (1926).
\bibitem{Fitzhugh}
R. FitzHugh. Thresholds and plateaus in the Hodgkin-Huxley nerve equations. J. Gen. Physiol.,
\textbf{43}, 867-896 (1960).
\bibitem{Nagumo}
J. S. Nagumo, S. Arimoto, and S. Yoshizawa. An active pulse transmission line simulating nerve
axon. Proc. IRE, \textbf{50}, 2061-2070 (1962).
\bibitem{Fields}
R. J. Fields and R. M. Noyes. Oscillations in chemical systems. IV. Limit cycle behavior in a
model of a real chemical oscillation. J. Chem. Phys., \textbf{60}, 1877-1884 (1974).
\bibitem{Belousov} B. P. Belousov. \emph{A periodic reaction and its mechanism.} in Field, R. J. and Burger, M. (Eds.), \emph{Oscillations and traveling waves in chemical systems}, pp. 605-613 (Wiley, New York, 1985)
\bibitem{Zhabotinsky} A. M. Zhabotinsky, F. Buchholtz, A. B. Kiyatkin, and I. R. Epstein.  Oscillations and waves in metal-ion-catalyzed bromate oscillating reactions in highly oxidized states.
J. Phys. Chem., \textbf{97},7578-7584 (1993).
\bibitem{Morris}
C. Morris, and H. Lecar. Voltage oscillations in the barnacle giant muscle fiber. Biophysical J.,
\textbf{35}, 193-213 (1981).
\bibitem{Guantes}
R. Guantes and J. Poyatos. Dynamical principles of two-component genetic oscillators.
PLoS Computational Biology, \textbf{2}:e30 (2006).
\bibitem{Segel}
 L. Segel and A. Goldbeter. Scaling in biochemical kinetics: dissection of a relaxation oscillator.
J. Math. Biol., \textbf{32},147-160 (1994).
\bibitem{Goldbeter}
A. Goldbeter and R. Lefever. Dissipative structures for an allosteric model: application to gly-
colytic oscillations. Biophys. J., \textbf{12}, 1302-1315 (1972).
\bibitem{Gonze}
D. Gonze, N. Markadieu, and A. Goldbeter. Selection of in-phase or out-of-phase synchroniza-
tion in a model based on global coupling of cells undergoing metabolic oscillations.
Chaos, \textbf{18}, 037127 (2008).
\bibitem{Westermark}
P. Westermark and A. Lansner. A model of the phosphofructokinase and glycolytic oscillations in
pancreating $\beta$-cell. Biophys. J., \textbf{85},126-139 (2003).
\bibitem{Whitney1} H.~Whitney, Characteristic functions and the algebra of logic, Ann. of Math., II. Ser. {\bf 84}, 405 (1933).
\bibitem{Whitney2}
H.~Whitney, A logical expansion in mathematics. Bull. Amer. math. Soc. {\bf 88}, 572 (1932).
\bibitem{Stanley} R.~P. Stanley,   {\em Enumerative Combinatorics Vol. I} (Cambridge University Press, Cambridge, UK, 1997).
\bibitem{VGM2} V.~Garc{\'\i}a-Morales, Symmetry analysis of cellular automata. Phys. Lett. A {\bf 377}, 276 (2013).
\bibitem{VGM3}
V.~Garc{\'\i}a-Morales, Origin of complexity and conditional predictability in cellular automata. Phys. Rev. E {\bf 88}, 042814 (2013).
\bibitem{VGM4}
V.~Garc{\'\i}a-Morales, Substitution systems and nonextensive statistics.
Physica A {\bf 440}, 110-117 (2015).
\bibitem{Conway2}
J. H. Conway, H. Burgiel and C. Goodman-Strauss. \emph{The Symmetries of Things}, (AK Peters, Wellesley MA, 2008).
\bibitem{Liu}
S. H. Liu, M. Leng and P. C. Ouyang, The Visualization of Spherical Patterns with Symmetries of the Wallpaper Group, Complexity 7315695 (2018).
\bibitem{Conway1}
J. H. Conway. \emph{The Orbifold Notation for Surface Groups.} In M. W. Liebeck and J. Saxl (eds.), \emph{Groups, Combinatorics and Geometry: Proceedings of the L.M.S. Durham Symposium, July 5-5, Durham, U.K., 1990}, London Mathematical Society Lecture Note Series 165, pp. 438-447 (Cambridge University Press, Cambridge, 1992). 
\bibitem{Verberck}
B. Verberck, Symmetry-Adapted Fourier Series for the Wallpaper Groups, Symmetry \textbf{4}, 379-426 (2012).
\bibitem{Shechtman}
D. Shechtman, I. Blech, D. Gratias, J. Cahn, Metallic Phase with Long-Range Orientational Order and No Translational Symmetry. Phys. Rev. Lett. \textbf{53}, 1951-1953 (1984).
\bibitem{Cronin} J. Cronin, {\em Mathematical Aspects of Hodgkin-Huxley Neural Theory} (Cambridge University Press, Cambridge UK, 1987).
\bibitem{GoldbeterBOOK} A. Goldbeter, {\em Biochemical oscillations and cellular rhythms} (Cambridge University Press, Cambridge UK, 1996).
\bibitem{Strogatz} S. Strogatz, {\em Nonlinear dynamics and chaos} (Westview Press, Boulder, Colorado, 2014).
\bibitem{Lefschetz} S. Lefschetz, {\em Differential equations: Geometric theory} (Interscience, New York, 1963).
\bibitem{Dulac} H. Dulac, Recherche des cycles limites, C.R. Acad. Sci. Paris {\bf 204}, 23 (1937).
\bibitem{Lloyd} N. G. Lloyd, A note on the number of limit cycles in certain two-dimensional systems, J. Lond. Math. Soc. {\bf 20}, 277-286 (1979).
\bibitem{Dutta} A. Dutta, D. Das, D. Banerjee and J. K. Bhattacharjee, Estimating the boundaries of a limit cycle in a 2D dynamical system using renormalization group, Comm. Nonlin. Sci. Numer. Simulat. {\bf 57}, 47-57 (2018).
\bibitem{Banerjee} D. Banerjee and J. K. Bhattacharjee, Analyzing jump phenomena and stability in nonlinear oscillators using renormalization group arguments, Am. J. Phys. {\bf 78}, 142 (2010).
\bibitem{Bendixson} I. Bendixson, Sur les courbes d\'efinies par des \'equations diff\'erentielles, Acta Math. {\bf 24}, 1 (1901).
\bibitem{Teschl} G. Teschl, {\em Ordinary differential equations and dynamical systems} (American Mathematical Society, Providence, 2012).
\bibitem{Kuznetsov} Y. A. Kuznetsov, {\em Elements of applied bifurcation theory} (Springer, New York, 2010).
\bibitem{Lawden} D. F. Lawden, Families of ovals and their orthogonal trajectories, Mathematical Gazette \textbf{83} 410-420 (1999).
\bibitem{Basset} A. B. Basset, \emph{An Elementary Treatise on Cubic and Quartic Curves}, (Deighton Bell and Co., London, 1901).
\bibitem{KuramotoBOOK} Y. Kuramoto, {\em Chemical Oscillations, Waves and Turbulence} (Springer, Berlin, 1984).
\bibitem{PRK} A. Pikovsky, M. Rosenblum and J. Kurths, {\em Synchronization: A universal concept in nonlinear sciences} (Cambridge University Press, 2001).
\bibitem{contemphys} V. Garc{\'\i}a-Morales and K. Krischer, The complex Ginzburg-Landau equation: an introduction, Contemp. Phys. {\bf 53}, 79-95 (2012).






\end{thebibliography}
\end{document}